\theoremstyle{remark}
\newtheorem{remark}{Remark}
\theoremstyle{plain}
\newtheorem{theorem}{Theorem}
\newtheorem{proposition}{Proposition}
\begin{document}
\title[Semiclassical evolution on the torus beyond the Ehrenfest time]{Semiclassical evolution of quantum wave packets on the torus beyond the Ehrenfest time in terms of Husimi distributions}
\author{A S Trushechkin}
\address{Steklov Mathematical Institute of Russian Academy of Sciences, 119991 Moscow, Russia}
\ead{trushechkin@mi.ras.ru}
\begin{abstract}
The semiclassical long-time limit of free evolution of quantum wave packets on the torus is under consideration. Despite of simplicity of this system, there are still open questions concerning the detailed description of the evolution on time scales beyond the Ehrenfest time. One of the approaches is based on the limiting Wigner or Husimi distributions of time-evolved wave packets as the Planck constant tends to zero and time tends to infinity. We derive explicit expressions for semiclassical measures corresponding to all time scales and the corresponding stages of evolution: classical-like motion, spreading of the wave packet, and its revivals. Also we discuss limitations of the approach based on semiclassical measures and suggest its generalization.
\end{abstract}
\pacs{03.65.Sq, 03.65.-w}
\maketitle

\section{Introduction}

The dynamics of a localized quantum wave packet in a finite region or on a compact manifold on short time scales is well-known to be described the classical motion of its center and gradual spreading. The characteristic time scale when this description breaks down is called the Ehrenfest time. The Ehrenfest time is estimated as $O(\ln\hbar^{-1})$ (where $\hbar$ is the Planck constant), although it may be larger for integrable systems (see rigorous results in \cite{CR,Bambusi,Hage1,Hage2,Bouzounia,Schubert}). The description of semiclassical evolution of quantum wave packets at the Ehrenfest time and beyond it attracts much attention \cite{BerrySpin,Schubert,SemiclWaveRev,WangHeller,Schubert-spread,MaciaRiemann, MaciaTorus, AnaMaciaTorus, AnaMaciaView, Ana14}. Mathematically, can be formulated as the simultaneous limit when the Planck constant goes to zero and time goes to infinity. We will refer to this type of limits as semiclassical long-time limits \cite{BerrySpin}.

One of directions of researches is related to the so called semiclassical measures, i.e., semiclassical limit of Wigner measures \cite{Gerard, Mark, Carles, Numeric}. In general, the description of semiclassical dynamics in the Wigner--Weyl representation is quite popuar \cite{Almedia2013,Almedia2016,Gosson}. In \cite{MaciaRiemann, MaciaTorus, AnaMaciaTorus, AnaMaciaView, Ana14}, a number of properties of semiclassical measures related to times beyond the Ehrenfest time have been obtained. However, explicit calculation of semiclassical measures even for simplest cases presents certain difficulties. In particular, in \cite{AnaMaciaView} this problem is characterised as 'notoriously difficult'. 

The result of the present work is explicit calculation of semiclassical measures related to the free dynamics of quantum wave packets on the flat torus $\mathbb T^d=\mathbb R^d/(2\pi\mathbb Z^d)$. We generalize the results of \cite{VolTrush} where only Gaussian wave packets are considered.

Also our results provide further insights about the limitations of the approach to long-time semiclassical dynamics based on semiclassical measures (reported in \cite{Carles}) and propose its generalizations.

The usual way to deal with quantum dynamics in the semiclassical approximation is to reduce it to corresponding problems in classical dynamics. Here we adopt an alternative approach based on direct summation of series of eigenvectors for time-evolved wavepackets. An application of this approach to the Jaynes--Cummings model is given in \cite{Kara}.

The following text is organised as follows. Preliminary facts about Wigner and Husimi measures, semiclassical measures, and coherent states are given in~\ref{SecPrelim}. Also we prove some intermediate results there. The main results (Theorems \ref{ThMain}--\ref{ThTime}) are stated and proved in~\ref{SecMain}. Theorem~\ref{ThMain}  is the main one, while Theorems~\ref{ThMu} and~\ref{ThTime} are corollaries of Theorem~\ref{ThMain} and intermediate formulas obtained in its proof. In Sec.~\ref{SecDiscus} we discuss the results.

\section{Preliminaries}\label{SecPrelim}

\subsection{Schr\"odinger equation on the flat torus}
Consider the Schr\"odinger equation on the flat torus $\mathbb T^d=\mathbb R^d/(2\pi\mathbb Z^d)$:

\begin{equation}\label{EqSchr}
i\hbar\frac{\partial \psi_t}{\partial t}=-\hbar^2\Delta\psi_t,
\end{equation}
where $\psi_t=\psi_t(x)$, $t\in\mathbb R$, $x\in\mathbb T^d$, $\Delta$ is the Laplace operator over the spatial variables $x$, $\hbar>0$ is the Planck constant. The solution of the Cauchy problem with some initial function 
\begin{equation*}
\psi_0(x)=\frac1{(2\pi)^{\frac d2}}\sum_{k\in\mathbb Z^d}c^{(0)}_k\exp(ikx)\in L^2(\mathbb T^d)
\end{equation*}
can be formally represented as an action of a unitary operator in $L^2(\mathbb T^d)$:
\begin{equation}\label{EqEvol}
\psi_t(x)=\exp(-i\hbar t\Delta)\psi_0(x)=\frac1{(2\pi)^{\frac d2}}\sum_{k\in\mathbb Z^d}c^{(0)}_k\exp(ikx-i\hbar t k^2).
\end{equation}

Formula (\ref{EqEvol}) directly implies that every solution of (\ref{EqSchr}) is periodic with the period 
\begin{equation}\label{EqT}
T_\hbar=\frac{2\pi}\hbar,
\end{equation}
i.e. $\psi_{t+T_\hbar}=\psi_t$. The time $T_\hbar$ is called the revival time. This periodicity is caused by interference and has purely wave nature. As $\hbar\to0$, the revival time tends to infinity.

\subsection{Semiclassical measures}
We will identify functions on $\mathbb T^d$ with $(2\pi\mathbb Z^d)$-periodic functions on  $\mathbb R^d$. Then, the Wigner distribution on the phase space $\Omega=\mathbb T^d\times\mathbb R^d$ for an arbitrary function $\psi\in L^2(\mathbb T^d)$ is defined as \cite{QMPS,Hillery}

\begin{eqnarray}
W_\psi(q,p)&=\frac1{(\pi\hbar)^d}\int_{\mathbb R^d}\overline{\psi(q+x)}\psi(q-x)\exp\left(\frac{2ipx}\hbar\right)dx\nonumber\\&=\frac1{(2\pi)^d}\sum_{j,k\in\mathbb Z^d}\overline{c_j}c_k\exp[i(k-j)q]\,\delta\left(p-\frac\hbar2(k+j)\right),\label{EqWigner}
\end{eqnarray}
where $c_k$ are the Fourier coefficients of $\psi(x)=(2\pi)^{-d/2}\sum_kc_k\exp(ikx)$, $\delta(\cdot)$ is the Dirac delta function, and $(q,p)\in\mathbb T^d\times\mathbb R^d$. An important property of the Wigner distribution is that its marginal distributions over $q$ and $p$ coincide with the corresponding quantum-mechanical distributions:
\numparts
\begin{eqnarray}\label{EqWignerProp}
\int_{\mathbb R^d} W_\psi(q,p)\,dp&=|\psi(q)|^2,\\
\int_{\mathbb T^d} W_\psi(q,p)\,dq&= \sum_{k\in\mathbb Z^d}|c_k|^2\delta(p-\hbar k).
\label{EqWignerProp2}
\end{eqnarray}
\endnumparts
However, the Wigner distribution is generally non-positive. By this reason,  sometimes it is called quasiprobability distribution.

Consider a family of functions $\{\psi_\hbar\}$ depending on $\hbar$ and the corresponding Wigner distributions $W_{\psi_\hbar}$. For shortness, we will write $W_\hbar$ if $\psi$ is fixed. If there exists a measure $\mu$ on $\Omega$ such that there exists a limit
\begin{equation}\label{EqSemiM}
\lim_{\hbar\to0}\int_\Omega W_{\hbar}(q,p)a(q,p)\,dqdp=\int_\Omega a(q,p)\mu(dqdp)
\end{equation}
for all functions $a\in C^\infty_0(\Omega)$ (infinitely differentiable functions with compact supports), then the measure $\mu$ is called the semiclassical measure \cite{Zworski,MaciaRiemann,MaciaTorus}. 

\begin{remark}
It is always possible to choose a proper sequence $\{\psi_{\hbar_n}\}$ such that limit (\ref{EqSemiM}) exists for this sequence.
\end{remark}

\begin{remark}
The Planck constant $\hbar$ is a fundamental physical constant with  dimensions of action. So, rigorously speaking, it cannot tend to zero. This formal mathematical limit means that the Planck constant is much smaller than another quantity with  dimensions of action arising in a concrete problem. In Section~\ref{SecPhysSmall} we will describe such kind of conditions for our case. 
\end{remark}

We will adopt another, equivalent, approach to the semiclassical measures, which is based not on the Wigner distribution, but on the Husimi distribution. For this purpose, we need to define coherent states on the torus.

\subsection{Coherent states}

Consider a smooth rapidly decreasing function $\varphi(x)$, $x\in \mathbb R^d$, with unit $L^2(\mathbb R^d)$-norm and a family of functions from $L^2(\mathbb R^d)$ of the form
\begin{equation}\label{EqCoherRd}
\eta^{(\hbar)}_{qp}(x)=\frac1{\sqrt{\alpha_\hbar^d}}\varphi\left(\frac{x-q}{\alpha_\hbar}\right)
\exp\left\lbrace\frac{ip(x-q)}\hbar\right\rbrace,
\end{equation}
where $(q,p)\in\mathbb R^{2d}$ and $\alpha_\hbar>0$ is a constant depending on $\hbar$ such that $\alpha_\hbar\to0$ and $\hbar/\alpha_\hbar\to0$ as $\hbar\to0$ (e.g., $\alpha_\hbar=\sqrt\hbar$). These functions satisfy the general definition of coherent states on $L^2(\mathbb R^d)$ given in \cite{Klauder}: this family of functions continuously depends in its parameters $(q,p)$ and constitutes a continuous resolution of identity:
\begin{equation}\label{EqUnityRd}
\frac1{(2\pi\hbar)^d}\int_{\mathbb R^2} P[\eta^{(\hbar)}_{qp}]\,dqdp=1.
\end{equation}
Here $P[\psi]$ is an operator acting on an arbitrary vector  $\chi$ as $P[\psi]\chi=(\psi,\chi)\psi$ ($P[\psi]=|\psi\rangle\langle\psi|$ in the Dirac notations; it is a projector whenever $\psi$ is a unit vector); $(\cdot,\cdot)$ is a scalar product (with linearity in the second argument). Equality (\ref{EqUnityRd}) is understood in the weak sense: for all $\psi,\chi\in L^2(\mathbb R^d)$ we have
\begin{equation*}
\frac1{(2\pi\hbar)^d)}\int_{\mathbb R^2} (\psi,\eta^{(\hbar)}_{qp})(\eta^{(\hbar)}_{qp},\chi)\,dqdp=(\psi,\chi).
\end{equation*}
Usually coherent states are required to correspond to classical particles in some way. Let us proof the following known property (we will use it subsequently).

\begin{proposition}\label{PropDeltaLine}
The semiclassical measure of the family of functions $\eta^{(\hbar)}_{q_0p_0}$ is  the Dirac measure at $(q_0,p_0)$.
\end{proposition}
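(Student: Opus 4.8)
The plan is to compute the Wigner distribution of $\eta^{(\hbar)}_{q_0p_0}$ in closed form, recognise it as a single fixed profile rescaled so as to concentrate at $(q_0,p_0)$, and then evaluate the pairing (\ref{EqSemiM}) by a change of variables together with dominated convergence. Throughout, write $W^{(1)}_\varphi$ for the Wigner distribution (\ref{EqWigner}) of the fixed profile $\varphi$ computed with $\hbar=1$.

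First I would substitute the explicit form (\ref{EqCoherRd}) into the integral definition of the Wigner distribution in (\ref{EqWigner}). The two coherent-state phases combine into $\exp\{-2ip_0x/\hbar\}$, which together with the kernel $\exp\{2ipx/\hbar\}$ yields $\exp\{2i(p-p_0)x/\hbar\}$; after the substitution $x\mapsto\alpha_\hbar y$ the entire $\hbar$-, $q_0$- and $p_0$-dependence collapses into a single rescaled pair of variables, giving
\[
W_{\eta^{(\hbar)}_{q_0p_0}}(q,p)=\frac1{\hbar^d}\,W^{(1)}_\varphi\!\left(\frac{q-q_0}{\alpha_\hbar},\;\frac{(p-p_0)\alpha_\hbar}{\hbar}\right).
\]
This profile carries unit mass, in agreement with $\|\varphi\|_{L^2(\mathbb R^d)}=1$ and the marginal property (\ref{EqWignerProp}).

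Next, for a test function $a\in C^\infty_0(\Omega)$ I would change variables in the pairing by setting $Q=(q-q_0)/\alpha_\hbar$ and $P=(p-p_0)\alpha_\hbar/\hbar$; the Jacobian $dq\,dp=\hbar^d\,dQ\,dP$ cancels the prefactor and leaves
\[
\int_\Omega W_{\eta^{(\hbar)}_{q_0p_0}}\,a\,dq\,dp=\int_{\mathbb R^{2d}}W^{(1)}_\varphi(Q,P)\,a\!\left(q_0+\alpha_\hbar Q,\;p_0+\frac{\hbar}{\alpha_\hbar}P\right)dQ\,dP.
\]
Since $\alpha_\hbar\to0$ and $\hbar/\alpha_\hbar\to0$, the argument of $a$ tends to $(q_0,p_0)$ for every fixed $(Q,P)$, so the integrand converges pointwise to $a(q_0,p_0)\,W^{(1)}_\varphi(Q,P)$.

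The step that needs care is exchanging the limit with the integral. Because $\varphi$ is smooth and rapidly decreasing, the Wigner transform keeps it in the Schwartz class, so $W^{(1)}_\varphi\in L^1(\mathbb R^{2d})$; as $a$ is bounded, the fixed function $\|a\|_\infty\,|W^{(1)}_\varphi|$ dominates the integrand uniformly in $\hbar$, and dominated convergence gives the limit $a(q_0,p_0)\int_{\mathbb R^{2d}}W^{(1)}_\varphi\,dQ\,dP=a(q_0,p_0)$. By (\ref{EqSemiM}) this identifies the semiclassical measure with $\delta_{(q_0,p_0)}$. The genuine obstacle is therefore just this uniform integrable bound, which reduces to the standard fact that the Wigner transform preserves the Schwartz class; that $\eta^{(\hbar)}_{q_0p_0}$ is an $\mathbb R^d$-function rather than a torus function is harmless, since its concentration near $q_0$ means that periodising it to a function on $\mathbb T^d$ alters the pairing by an amount vanishing as $\hbar\to0$.
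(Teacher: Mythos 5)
Your proposal is correct and follows essentially the same route as the paper: both compute the Wigner distribution of $\eta^{(\hbar)}_{q_0p_0}$ explicitly as $\hbar^{-d}$ times a fixed unit-mass profile evaluated at $\bigl((q-q_0)/\alpha_\hbar,\ \alpha_\hbar(p-p_0)/\hbar\bigr)$ and deduce concentration at $(q_0,p_0)$. The only difference is that you spell out the final weak-convergence step (change of variables plus dominated convergence using that the Wigner transform of a Schwartz function is integrable), which the paper leaves implicit after noting $\int f\,dq\,dp=1$.
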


\begin{proof}
Denote $W_{q_0,p_0}^{(\hbar)}$ the Wigner distribution corresponding to the wave function $\upsilon^{(\hbar)}_{q_0p_0}$. By (\ref{EqWigner}),
\begin{eqnarray}\fl
W_{q_0,p_0}^{(\hbar)}(q,p)=\frac1{(\pi\hbar)^d}\int_{\mathbb R^d}
\overline{\eta^{(\hbar)}_{q_0p_0}(q+x)}\eta^{(\hbar)}_{q_0p_0}(q-x)\exp\left(\frac{2ipx}\hbar\right)dx\nonumber\\=
\frac1{(\pi\hbar\alpha_\hbar)^d}\int_{\mathbb R^d}
\overline{\varphi^{(\hbar)}\left(\frac{q-q_0+x}{\alpha_\hbar}\right)}
\varphi^{(\hbar)}\left(\frac{q-q_0-x}{\alpha_\hbar}\right)\nonumber\\\quad\times
\exp\left\lbrace\frac{2ipx+ip_0(q-q_0-x)-ip_0(q-q_0+x)}\hbar\right\rbrace dx\nonumber\\=
\frac1{(\pi\hbar)^d}\int_{\mathbb R^d}
\overline{\varphi^{(\hbar)}\left(\frac{q-q_0}{\alpha_\hbar}+x\right)}
\varphi^{(\hbar)}\left(\frac{q-q_0}{\alpha_\hbar}-x\right)\nonumber\\\quad\times
\exp\left\lbrace\frac{2i(p-p_0)\alpha_\hbar x}\hbar\right\rbrace dx\nonumber\\=
\left(\frac{\alpha_\hbar}\hbar\right)^d\left(\frac1{\alpha_\hbar}\right)^d 
f\left(\frac{q-q_0}{\alpha_\hbar},\frac{\alpha_\hbar}\hbar(p-p_0)\right),\label{EqWdelta}
\end{eqnarray}
where
\begin{equation*}
f(q,p)=\frac1{\pi^d}\int_{\mathbb R^d}\overline{\varphi(q+x)}\varphi(q-x)\exp(2ipx)dx.
\end{equation*}
Since $\int_{\mathbb R^d}f(q,p)dqdp=1$, expression (\ref{EqWdelta}) implies
\begin{equation*}
\lim_{\hbar\to0}W_{q_0,p_0}^{(\hbar)}(q,p)=\delta(q-q_0)\delta(p-p_0).
\end{equation*}
\end{proof}

The Dirac measure at $(q_0,p_0)$  corresponds to a classical particle in this phase point  (for the configuration space $\mathbb R^d$; we will return to the case of torus a bit later). Time evolution of this semiclassical measure on short times also can be shown to correspond to the classical phase trajectory.

A particular case are Gaussian coherent states, which correspond to the following choice of  the function $\varphi$:
\begin{equation}\label{EqGauss}
\varphi(x)=\frac{1}{(2\pi)^{\frac d4}}\exp\left(-\frac{x^2}4\right).
\end{equation}
In this case, $\alpha_\hbar$ and $\hbar/(2\alpha_\hbar)$ are the standard deviations of the position and the momentum respectively. Their product gives $\hbar/2$, so, the Gaussian coherent states minimize the uncertainty relations. 

Functions of form (\ref{EqCoherRd}) are also referred to as quantum wave packets because they are superpositions of monochromatic waves $\exp(ipx)$ and they are localized in both position and momentum spaces.

On the base of coherent states (\ref{EqCoherRd}) on $\mathbb R^d$, coherent states on the torus $\mathbb T^d$ can be constructed as follows \cite{DG,KR96,Gonzalez,KR07,KR08}:
\begin{equation}\label{EqCoher}
\upsilon^{(\hbar)}_{qp}(x)=\sum_{n\in\mathbb Z^d}\eta^{(\hbar)}_{qp}(x-2\pi n),
\end{equation}
where $(q,p)\in\Omega$. They also constitute a continuous resolution on identity:
\begin{equation}\label{EqUnity}
\frac1{{(2\pi\hbar)^d}}\int_\Omega P[\upsilon^{(\hbar)}_{qp}]\,dqdp=1.
\end{equation}
Let us note that the functions $\upsilon^{(\hbar)}_{qp}$ as elements of $L^2(\mathbb T^d)$ are not normalized to unity. However, their norms tend to unity as $\hbar\to0$. Indeed,
\begin{eqnarray*}
\|\upsilon^{(\hbar)}_{qp}\|^2&=\int_{\mathbb T^d}\overline{\upsilon^{(\hbar)}_{qp}(x)}\upsilon^{(\hbar)}_{qp}(x)\,dx=
\sum_{n\in\mathbb Z^d}\int_{\mathbb T^d}
\overline{\eta^{(\hbar)}_{qp}(x-2\pi n)}\upsilon^{(\hbar)}_{qp}(x)\,dx\\&=
\int_{\mathbb R^d}
\overline{\eta^{(\hbar)}_{qp}(x)}\upsilon^{(\hbar)}_{qp}(x)\,dx=
\sum_{m\in\mathbb Z^d}\int_{\mathbb R^d}
\overline{\eta^{(\hbar)}_{qp}(x)}\eta^{(\hbar)}_{qp}(x-2\pi m)\,dx\\&=
1+\sum_{m\in\mathbb Z^d\backslash\{0\}}\int_{\mathbb R^d}
\overline{\eta^{(\hbar)}_{qp}(x)}\eta^{(\hbar)}_{qp}(x-2\pi m)\,dx.
\end{eqnarray*}
We have used that the functions $\eta^{(\hbar)}_{qp}(x)\in L^2(\mathbb R^d)$ have the unit norm. Since the function $\varphi$ rapidly decreases, the last expression tends to unity.

From now $W_{q_0,p_0}^{(\hbar)}$ will denote the Wigner distribution corresponding to the wave function $\upsilon^{(\hbar)}_{q_0p_0}$. We will use the following property of the distribution $W_{q_0,p_0}^{(\hbar)}$.

\begin{proposition}\label{PropUni}
\numparts
\begin{eqnarray}\label{EqWquni}
&W_{q_0,p_0}^{(\hbar)}(q,p)&=W_{q_0+\Delta q,p_0}^{(\hbar)}(q+\Delta q,p),\\
&W_{q_0,p_0}^{(\hbar)}(q,p)&=W_{q_0,p_0+\Delta p}^{(\hbar)}(q,p+\Delta p)+o(1),\quad \hbar\to0\label{EqWpuni}
\end{eqnarray}
\endnumparts
\end{proposition}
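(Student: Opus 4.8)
The plan is to prove the two identities by different means, reflecting that a position shift is an exact symmetry of the torus quantization whereas a momentum shift is not. For (\ref{EqWquni}) I would first note that the torus coherent state (\ref{EqCoher}) is an exact spatial translate of itself: since shifting the center $q_0\mapsto q_0+\Delta q$ in (\ref{EqCoherRd}) merely translates each summand, $\eta^{(\hbar)}_{q_0+\Delta q,p_0}(x-2\pi n)=\eta^{(\hbar)}_{q_0,p_0}(x-\Delta q-2\pi n)$, so (\ref{EqCoher}) gives $\upsilon^{(\hbar)}_{q_0+\Delta q,p_0}(x)=\upsilon^{(\hbar)}_{q_0,p_0}(x-\Delta q)$ for every $\Delta q$. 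It then suffices to invoke the covariance of the Wigner transform under spatial translations, $W_{\psi(\cdot-a)}(q,p)=W_\psi(q-a,p)$, which is immediate from (\ref{EqWigner}) because replacing $\psi$ by $\psi(\cdot-a)$ turns the arguments $q\pm x$ into $(q-a)\pm x$. With $a=\Delta q$ and $\psi=\upsilon^{(\hbar)}_{q_0,p_0}$ this gives (\ref{EqWquni}) with no error term.

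For (\ref{EqWpuni}) the same idea breaks down: a momentum boost by $\Delta p$ multiplies $\eta^{(\hbar)}_{q_0,p_0}$ in (\ref{EqCoherRd}) by the plane wave $\exp\{i\Delta p\,(x-q_0)/\hbar\}$, which is $2\pi$-periodic, and hence descends to $\mathbb T^d$, only when $\Delta p\in\hbar\mathbb Z^d$. Carrying the boost through the periodization (\ref{EqCoher}) attaches the $n$-dependent phases $\exp\{-2\pi i\,\Delta p\cdot n/\hbar\}$ to the individual terms, so on the torus $\upsilon^{(\hbar)}_{q_0,p_0+\Delta p}$ is \emph{not} a plane-wave multiple of $\upsilon^{(\hbar)}_{q_0,p_0}$. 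To measure the defect I would pass to the torus Fourier coefficients $c_k$ of $\upsilon^{(\hbar)}_{q_0,p_0}$; unfolding the integral as in the norm computation above shows that $c_k$ is proportional to $\alpha_\hbar^{d/2}e^{-ikq_0}\widehat\varphi\bigl(\alpha_\hbar(k-p_0/\hbar)\bigr)$, where $\widehat\varphi$ is the Fourier transform of $\varphi$. These coefficients are concentrated near $k=p_0/\hbar$ and, crucially, vary slowly in $k$, on the scale $1/\alpha_\hbar\gg1$.

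Boosting by $\Delta p$ replaces $p_0$ by $p_0+\Delta p$ in the coefficients. Writing $\Delta p/\hbar=\ell+\rho$ with $\ell\in\mathbb Z^d$ the componentwise nearest integer and $\rho$ the bounded fractional remainder, I would absorb the integer part $\ell$ by the exact relabelling $(j,k)\mapsto(j-\ell,k-\ell)$ of the double sum in the second line of (\ref{EqWigner}); together with the compensating substitution $p\mapsto p+\Delta p$ this reproduces $W^{(\hbar)}_{q_0,p_0}$ verbatim. The only genuine error comes from the fractional part $\rho$, and this is the step I expect to be the main obstacle to make rigorous. After reindexing, two small discrepancies remain: the coefficient mismatch $\widehat\varphi(\alpha_\hbar(k-p_0/\hbar-\rho))$ against $\widehat\varphi(\alpha_\hbar(k-p_0/\hbar))$, which is $O(\alpha_\hbar)$ uniformly since $\widehat\varphi$ is smooth and $\rho$ is bounded, and the residual displacement $\hbar\rho$ of the momenta $\frac\hbar2(j+k)$ at which the delta functions sit, which is $O(\hbar)$. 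Testing against an arbitrary $a\in C^\infty_0(\Omega)$ and using the rapid decay of $c_k$ to sum the terms, both contributions are $o(1)$ as $\hbar\to0$, which is precisely the error in (\ref{EqWpuni}).

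Equivalently, one can keep the interference picture: the phases $\exp\{-2\pi i\,\Delta p\cdot n/\hbar\}$ modify only the off-diagonal ($m\neq n$) cross-Wigner terms of the periodization, and each such term carries a factor oscillating in $p$ with frequency of order $(m-n)/\hbar$; by a non-stationary-phase (Riemann--Lebesgue) argument these vanish weakly, while the diagonal terms rebuild $W^{(\hbar)}_{q_0,p_0}$. Either route isolates the same mechanism, namely the failure of momentum translations to be exact torus symmetries, as the sole source of the $o(1)$ term.
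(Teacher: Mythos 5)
Your proof is correct, and for the momentum-shift identity (\ref{EqWpuni}) it takes a genuinely different route from the paper. The position-shift identity (\ref{EqWquni}) is handled the same way in both (the paper simply declares it ``obvious from the definitions''; you spell out the exact relation $\upsilon^{(\hbar)}_{q_0+\Delta q,p_0}(x)=\upsilon^{(\hbar)}_{q_0,p_0}(x-\Delta q)$ and translation covariance of (\ref{EqWigner})). For (\ref{EqWpuni}) the paper stays in the position representation: it approximates $\upsilon^{(\hbar)}_{q_0,p_0}(x)$ by its nearest periodized copy $\eta^{(\hbar)}_{q_0,p_0}(x-2\pi n_{x-q_0})$ as in (\ref{EqUspEta}), so that the boost produces the extra phase $\exp\{2i\pi\Delta p\,(n_{q-q_0+x}-n_{q-q_0-x})/\hbar\}$ inside the Wigner integral, and then kills this phase by arguing that the oscillation of $\exp(2ipx/\hbar)$ against a test function localizes the $x$-integration near $x=0$, where the two integer parts agree (with a separate remark for the exceptional set $q-q_0\in\pi\mathbb Z^d$). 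You instead work on the Fourier side, split $\Delta p/\hbar$ into integer and fractional parts, absorb the integer part by an exact relabelling of the double sum in (\ref{EqWigner}), and bound the fractional-part error by $O(\alpha_\hbar)+O(\hbar)$ after testing against $a\in C^\infty_0(\Omega)$. Your version buys a quantitative error rate and avoids the manipulation of the discontinuous function $n_y$ and its exceptional set; it also dovetails with the Fourier-coefficient formula $c^{(\hbar)}_{k,qp}\propto\alpha_\hbar^{d/2}e^{-ikq}\widehat\varphi(\alpha_\hbar(k-p/\hbar))$ that the paper computes anyway in the proof of Theorem~\ref{ThMain}. The step you single out as the main obstacle --- the fractional remainder $\rho$ --- is actually the easy part: since $\sum_k\alpha_\hbar^d|\widehat\varphi(u_k-\alpha_\hbar\rho)-\widehat\varphi(u_k)|^2=O(\alpha_\hbar^2)$, the perturbation of the coefficient vector is $O(\alpha_\hbar)$ in $\ell^2$, and Cauchy--Schwarz over $j$ at fixed $m=k-j$, summed against the rapidly decaying Fourier coefficients of $a$, gives the claimed $o(1)$. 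Your closing ``interference picture'' remark --- off-diagonal cross-Wigner terms between periodized copies oscillating in $p$ at frequency $O(1/\hbar)$ and vanishing weakly by Riemann--Lebesgue --- is essentially the mechanism the paper itself invokes, so that alternative would reproduce the paper's argument rather than offer a second independent one.
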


\begin{proof}
The first equality is obvious from the definitions of the Wigner distribution and the functions $\eta$ and $\upsilon$. For the proof of the second inequality, we firstly note that
$$\eta_{q_0,p_0+\Delta p_0}^{(\hbar)}(x)=\eta_{q_0,p_0}^{(\hbar)}(x)
\exp\left(-\frac{i\Delta p(x-q)}\hbar\right),$$
\begin{equation}\label{EqUspEta}
\upsilon_{q_0,p_0}^{(\hbar)}(x)=\eta_{q_0,p_0}^{(\hbar)}(x-2\pi n_{x-q_0})+
o(1),\quad\hbar\to0,\end{equation}
where $n_y$ denotes the integer with the property $y-2\pi n_y\in[-\pi,\pi)^d$ for an arbitrary real $y$. Hence,
$$\upsilon_{q_0,p_0+\Delta p_0}^{(\hbar)}(x)=\upsilon_{q_0,p_0}^{(\hbar)}(x)
\exp\left(-\frac{i\Delta p(x-q-2\pi n_{x-q_0})}\hbar\right)+
o(1),\quad\hbar\to0.$$
Then,
\begin{eqnarray*}\fl
W_{q_0,p_0+\Delta p}^{(\hbar)}(q,p+\Delta p)\!=
\frac1{(\pi\hbar)^d}\int_{\mathbb R^d}
\overline{\upsilon^{(\hbar)}_{q_0,p_0+\Delta p}(q+x)}
\upsilon^{(\hbar)}_{q_0,p_0+\Delta p}(q-x)
\exp\left(\frac{2i(p+\Delta p)x}\hbar\right)\!dx\\\fl
=\frac1{(\pi\hbar)^d}\int_{\mathbb R^d}
\overline{\upsilon^{(\hbar)}_{q_0,p_0}(q+x)}
\upsilon^{(\hbar)}_{q_0,p_0}(q-x)
\exp\left(\frac{2ipx}\hbar\right)
\exp\left(\frac{2i\pi \Delta p}\hbar(n_{q-q_0+x}-n_{q-q_0-x})\right)dx\\\fl
+o(1).
\end{eqnarray*}
Due to the highly oscillating term $\exp(2ipx/\hbar)$ (where $p$ is a variable of integration with a test function), the integration over $x$ is actually performed in an infinitesimal neighbourhood of zero. Hence, $n_{q-q_0+x}=n_{q-q_0-x}$ unless $q-q_0=\pi k$ for some $k$, and (\ref{EqWpuni}) is proved. If $q-q_0=\pi k$, then both  $W_{q_0,p_0}^{(\hbar)}(q,p)$ and $W_{q_0,p_0+\Delta p}^{(\hbar)}(q,p+\Delta p)$ are infinitesimal and (\ref{EqWpuni}) is obviously true.
\end{proof}

\begin{proposition}\label{PropDelta}
The semiclassical measure of the family of functions $\upsilon^{(\hbar)}_{q_0p_0}$ is sum of the Dirac measures at the points $(q_0+2\pi n,p_0)$, $n\in\mathbb Z^d$.
\end{proposition}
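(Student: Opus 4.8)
The plan is to evaluate the defining limit (\ref{EqSemiM}) for the family $\upsilon^{(\hbar)}_{q_0p_0}$ directly, reducing everything to the coherent states on $\mathbb{R}^d$ already treated in Proposition~\ref{PropDeltaLine}. As a preliminary simplification I would invoke Proposition~\ref{PropUni}: by the exact identity (\ref{EqWquni}) the $q_0$-dependence of $W^{(\hbar)}_{q_0,p_0}$ is a pure translation in $q$, and by (\ref{EqWpuni}) the $p_0$-dependence is a translation in $p$ up to $o(1)$; pairing these with $a\in C^\infty_0(\Omega)$ shows it suffices to prove the claim for $q_0=p_0=0$ and then translate the resulting measure back to $(q_0,p_0)$. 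For $q_0=p_0=0$ I would substitute the series (\ref{EqCoher}), $\upsilon^{(\hbar)}_{00}(x)=\sum_{n}\alpha_\hbar^{-d/2}\varphi((x-2\pi n)/\alpha_\hbar)$, into the Wigner distribution (\ref{EqWigner}). This turns $W^{(\hbar)}_{0,0}(q,p)$ into a double series $\sum_{n,m\in\mathbb{Z}^d}$ of terms $\frac{1}{(\pi\hbar)^d\alpha_\hbar^d}\int_{\mathbb{R}^d}\overline{\varphi((q+x-2\pi n)/\alpha_\hbar)}\,\varphi((q-x-2\pi m)/\alpha_\hbar)\exp(2ipx/\hbar)\,dx$, which I would split into the diagonal part $n=m$ and the off-diagonal part $n\neq m$.

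For the diagonal part, the $n$-th term is manifestly the $\mathbb{R}^d$ Wigner distribution of $\eta^{(\hbar)}_{00}$ evaluated at $(q-2\pi n,p)$---that is, the function computed in (\ref{EqWdelta}) (with $q_0=p_0=0$) shifted by $2\pi n$ in position. Proposition~\ref{PropDeltaLine}, equivalently the explicit scaling form (\ref{EqWdelta}) together with $\int f=1$, gives that each such term converges weakly to $\delta(q-2\pi n)\delta(p)$. Since $a$ has compact support, only finitely many of the points $(2\pi n,0)$ lie in the support of $a$, and the rapid decay of $\varphi$ furnishes an $\hbar$-uniform summable majorant, so the limit may be interchanged with the sum over $n$; the diagonal part therefore contributes $\sum_{n}a(2\pi n,0)$.

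The off-diagonal terms are expected to vanish in the limit. Since $\varphi$ is localized on the scale $\alpha_\hbar\to0$, the two factors $\varphi((q+x-2\pi n)/\alpha_\hbar)$ and $\varphi((q-x-2\pi m)/\alpha_\hbar)$ are simultaneously non-negligible only when $q\approx\pi(n+m)$ and $x\approx\pi(n-m)$. For $n\neq m$ the effective value of $x$ is thus bounded away from zero, so the factor $\exp(2ipx/\hbar)$ oscillates rapidly in $p$ as $\hbar\to0$; pairing with the smooth, compactly supported $a$ and applying the Riemann--Lebesgue lemma (non-stationary phase in $p$) sends each off-diagonal term to $0$. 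As in the diagonal case the decay of $\varphi$ dominates the double sum over $n\neq m$ uniformly in $\hbar$, so the total off-diagonal contribution vanishes.

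Combining the two parts gives $\lim_{\hbar\to0}\int_\Omega W^{(\hbar)}_{0,0}(q,p)a(q,p)\,dqdp=\sum_{n}a(2\pi n,0)$, i.e. the semiclassical measure for $(q_0,p_0)=(0,0)$ is $\sum_{n}\delta_{(2\pi n,0)}$; translating back via the preliminary reduction yields $\sum_{n}\delta_{(q_0+2\pi n,p_0)}$, as claimed. The step I expect to be the main obstacle is the off-diagonal estimate: turning the heuristic ``localization plus oscillation'' picture into a rigorous bound and, crucially, controlling the infinite double sum over $(n,m)$ uniformly in $\hbar$ so that the termwise vanishing may legitimately be summed. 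The rapid decay of $\varphi$, already exploited above when showing $\|\upsilon^{(\hbar)}_{qp}\|\to1$, is what makes both the interchange of limit and summation and the off-diagonal bound go through.
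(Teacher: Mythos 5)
Your proof is correct and follows essentially the same route as the paper's: both reduce the torus Wigner distribution to translates of the $\mathbb{R}^d$ computation of Proposition~\ref{PropDeltaLine}, use the rapid decay of $\varphi$ to control the lattice sums, and kill the cross terms by the non-stationary oscillation of $\exp(2ipx/\hbar)$ at $x$ bounded away from zero. The only cosmetic difference is bookkeeping: the paper first replaces $\upsilon^{(\hbar)}_{q_0p_0}$ by its nearest $\eta$-translate via (\ref{EqUspEta}) and then identifies the two cell indices, whereas you keep the full double series and split it into diagonal and off-diagonal parts, which amounts to the same estimate.
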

\begin{proof}
Using (\ref{EqUspEta}),
\begin{eqnarray*}\fl
W_{q_0,p_0}^{(\hbar)}(q,p)=\frac1{(\pi\hbar)^d}\int_{\mathbb R^d}
\overline{\upsilon^{(\hbar)}_{q_0p_0}(q+x)}\upsilon^{(\hbar)}_{q_0p_0}(q-x)\exp\left(\frac{2ipx}\hbar\right)dx\\\fl=
\frac1{(\pi\hbar)^d}\int_{\mathbb R^d}
\overline{\eta^{(\hbar)}_{q_0p_0}(q+x-2\pi n_{q-q_0+x})}
\eta^{(\hbar)}_{q_0p_0}(q-x-2\pi n_{q-q_0-x})\exp\left(\frac{2ipx}\hbar\right)dx+o(1).\end{eqnarray*}
Using the same reasonings as in the proof of Proposition~\ref{PropUni}, we can put 
$$n_{q-q_0+x}=n_{q-q_0-x}=n_{q-q_0}.$$
Then, due to Proposition~\ref{PropDeltaLine},
$$\lim_{\hbar\to0}W_{q_0,p_0}^{(\hbar)}(q,p)=
\delta(q-q_0-2\pi n_{q-q_0})\delta(p-p_0),$$
or
$$\lim_{\hbar\to0}W_{q_0,p_0}^{(\hbar)}(q,p)=
\sum_{n\in\mathbb Z^d}\delta(q-q_0-2\pi n)\delta(p-p_0).$$

\end{proof}

We mentioned that the Gaussian coherent states on $\mathbb{R}^d$ minimize the uncertainty relations. The uncertainty relations require modifications for compact manifolds (e.g., torus) and bounded domains (e.g., infinite square well). There are several analogues of uncertainty relations for these cases. The  Gaussian coherent states on the torus minimize a variant of the uncertainty relations for the torus \cite{KR96,KR07,KRUncert}. Also some estimates of the standard deviations of position and momentum have been obtained in \cite{VolTrush-Trudy}.

\subsection{Husimi distribution}

For an arbitrary function $\psi\in L^2(\mathbb T^d)$ with unit norm, let us define the probability distribution on the phase space $\Omega$ as
\begin{equation*}
H_\psi(q,p)=\frac1{(2\pi\hbar)^d}|(\upsilon_{qp},\psi)|^2.
\end{equation*}
It is called the Husimi distribution (or the Husimi function) associated to $\psi$ \cite{QMPS, Hillery}
In contrast to the Wigner distribution, the Husimi distribution is positive by construction. The normalization condition
\begin{equation*}
\int_\Omega H_\psi(q,p)\,dqdp=1
\end{equation*}
is satisfied due to resolution of identity  (\ref{EqUnity}) by coherent states. However, marginal position and momentum distributions does not coincide with the original quantum-mechanical distributions (in contrast to (\ref{EqWignerProp}), (\ref{EqWignerProp2})). However, the Husimi distribution has a direct physical meaning -- see Remark~\ref{RemHusimi} below.

The tomographic represetation of quantum mechanics is proposed \cite{Manko,MankoTMF} to overcome the drawbacks of different phase space distributions corresponding to quantum states \cite{QMPS}. In the tomographic representation, not a single, but a family of probability distributions is assigned to a quantum state. Relations between the tomographic representation and the Husimi distribution is considered in \cite{MankoHusimi}.

The notion of semiclassical measures can be equivalently reformulated in terms of the Husimi distribution. Again, consider a family of functions $\{\psi_\hbar\}$ and denote their Husimi distributions as $H_\hbar$. Consider the limit
\begin{equation}\label{EqSemiMH}
\lim_{\hbar\to0}\int_\Omega H_{\hbar}(q,p)a(q,p)\,dqdp.
\end{equation}
It turns out that this limit coincides with (\ref{EqSemiM}). The definition of the semiclassical measure given in \cite{Martinez} is based exactly on the limit  (\ref{EqSemiMH}) for the Husimi distributions. But only the case of $\mathbb R^d$ and the Gaussian coherent states are considered there. Let us prove the equivalence of definitions  (\ref{EqSemiM}) and (\ref{EqSemiMH}) for the case of torus and for arbitrary coherent states of form (\ref{EqCoher}), (\ref{EqCoherRd}). We need an additional property.

\begin{proposition}\label{PropSmear}
The Husimi distribution of an arbitrary function $\psi\in L^2(\mathbb T^d)$  can be expressed as
\begin{equation}\label{EqHusimiWigner}
H_\psi(q,p)=\int_{\mathbb R^{2d}}W^{(\hbar)}_{q,p}(q',p')W_\psi(q',p')\,dq'dp',
\end{equation}
where  $W^{(\hbar)}_{q,p}$ and $W_\psi$ denote the Wigner distributions of the coherent state $\upsilon^{(\hbar)}_{q,p}$ and of the function $\psi$, respectively.
\end{proposition}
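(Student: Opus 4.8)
The plan is to recognise (\ref{EqHusimiWigner}) as the Moyal (overlap) identity relating the Husimi function to a phase-space pairing of Wigner distributions. Since by definition $H_\psi(q,p)=(2\pi\hbar)^{-d}|(\upsilon^{(\hbar)}_{qp},\psi)|^2$, it suffices to establish the overlap formula
\begin{equation*}
\int_{\mathbb R^{2d}}W_\phi(q',p')\,W_\chi(q',p')\,dq'dp'=\frac1{(2\pi\hbar)^d}|(\phi,\chi)|^2,
\end{equation*}
and then to apply it with $\phi=\upsilon^{(\hbar)}_{qp}$ and $\chi=\psi$, so that the coherent-state factor becomes $W^{(\hbar)}_{q,p}$.

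First I would prove this overlap identity for two genuine $L^2(\mathbb R^d)$ functions by direct computation. Inserting the defining integral (\ref{EqWigner}) for each factor introduces two auxiliary shift variables $x$ and $y$; carrying out the $p'$-integration produces a factor $(\pi\hbar)^d\delta(x+y)$, which collapses the double integral to the diagonal $y=-x$. The remaining integral over $(q',x)$ is turned into a product of two independent integrals by the change of variables $u=q'+x$, $v=q'-x$ (Jacobian $2^{-d}$), each recognised as an $L^2$ scalar product, the two being mutually conjugate. This yields exactly $(2\pi\hbar)^{-d}|(\phi,\chi)|^2$; this elementary but bookkeeping-heavy step is the computational core.

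Next I would pass from the torus to $\mathbb R^d$. Using the periodisation $\upsilon^{(\hbar)}_{qp}=\sum_n\eta^{(\hbar)}_{qp}(\cdot-2\pi n)$ together with the periodicity of $\psi$, the torus scalar product unfolds to an integral over all of $\mathbb R^d$, namely $(\upsilon^{(\hbar)}_{qp},\psi)_{\mathbb T^d}=(\eta^{(\hbar)}_{qp},\psi)_{\mathbb R^d}$, which converges because $\eta^{(\hbar)}_{qp}$ is rapidly decreasing. In parallel, one checks that the torus Wigner distribution (\ref{EqWigner}) is precisely the $\mathbb R^d$ Wigner transform of the periodic extension of $\psi$, interpreted distributionally, so that $W_\psi$ is the correct object to pair against the localised coherent-state kernel.

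The main obstacle is the rigorous treatment of the periodicity. The periodic extension of $\psi$ is not square-integrable on $\mathbb R^d$, so the overlap identity must be used in the extended form of a pairing between the Schwartz function $\eta^{(\hbar)}_{qp}$ and a tempered distribution, whose Wigner transform $W_\psi$ is a comb of Dirac masses in $p'$. A naive product of two such combs is ill defined (coincident $\delta$'s in the momentum variable), so the pairing is rescued precisely because the coherent-state factor entering (\ref{EqHusimiWigner}) must be taken as the smooth, spatially localised $\mathbb R^d$ Wigner function of $\eta^{(\hbar)}_{qp}$ given by (\ref{EqWdelta}) rather than the delta-comb of the torus state; the neglected cross-period terms are $o(1)$ by (\ref{EqUspEta}) and do not contribute. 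The remaining care is then to verify that the $q'$-integration over $\mathbb R^d$ of this localised kernel against the exponentials $e^{i(k-j)q'}$ converges and reproduces the Fourier coefficients of $\psi$, so that the double sum collapses to $|(\upsilon^{(\hbar)}_{qp},\psi)|^2$ and the identification with $H_\psi$ is complete.
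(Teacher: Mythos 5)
Your proposal is correct, and its computational core coincides with the paper's proof: the paper likewise unfolds the periodization so that the torus scalar product becomes $(\eta^{(\hbar)}_{qp},\psi)_{\mathbb R^d}$, and then performs exactly the delta-function manipulation underlying the Moyal overlap identity, only run in the opposite direction --- a Fourier representation of $\delta$ is \emph{inserted} to split one integral into a product of two Wigner transforms, rather than a product of two Wigner transforms being \emph{collapsed} onto the diagonal. The difference is one of packaging. You isolate the overlap identity $\int W_\phi W_\chi=(2\pi\hbar)^{-d}|(\phi,\chi)|^2$ as a lemma on $L^2(\mathbb R^d)$ and must then extend it to a pairing between a Schwartz-class Wigner function and the tempered distribution $W_\psi$; the paper keeps everything in a single forward chain of equalities and never separates the two factors until the last line, so no extension argument is ever invoked. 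What your route buys is precision on a point the paper glosses over: as you correctly note, the coherent-state kernel that actually emerges in (\ref{EqHusimiWigner}) is the smooth, localized $\mathbb R^d$ Wigner function of $\eta^{(\hbar)}_{qp}$ of the form (\ref{EqWdelta}), not the literal Wigner distribution of the periodized state $\upsilon^{(\hbar)}_{qp}$ (these differ by rapidly decaying cross-period terms), and it is this reading that makes the pairing against the delta comb in $p'$ well defined and the identity exact. The only thing left implicit in your sketch is the actual verification of the extended overlap identity for periodic $\psi$; but that reduces to the Fourier-coefficient computation you describe in your last paragraph (the $q'$-integral of the localized kernel against $e^{i(k-j)q'}$ evaluated at $p'=\hbar(k+j)/2$ resums to $|(\eta^{(\hbar)}_{qp},\psi)|^2$), so there is no gap, merely an extra step to write out.
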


This is a known relation between the Wigner and Husimi functions in $\mathbb R^d$ \cite{Hillery,McKenna}; let us prove if for $\mathbb T^d$. First of all, let us note that, in view of (\ref{EqWdelta}), the Wigner distribution of the coherent state $W^{(\hbar)}_{q,p}$ is a smooth function (regular distribution), so, expression  (\ref{EqHusimiWigner}) is well-defined (as the action of the generalized function $W_\psi$ on the test function $W^{(\hbar)}_{q,p}$).

\begin{proof}
We have
\begin{eqnarray*}\fl
H_\psi(q,p)=\frac1{(2\pi\hbar\alpha_\hbar)^d}
\sum_{n,m\in\mathbb Z^d}
\int_{\mathbb T^{2d}}dxdy\,\overline{\psi(x)}\psi(y)\varphi\left(\frac{x-2\pi n-q}{\alpha_\hbar}\right)
\overline{\varphi\left(\frac{y-2\pi m-q}{\alpha_\hbar}\right)}\\
\qquad\qquad\qquad\qquad\:\:\,\times\exp\left\lbrace\frac{ip[(x-2\pi n)-(y-2\pi m)]}\hbar\right\rbrace
\\=
\frac1{(2\pi\hbar\alpha_\hbar)^d}
\int_{\mathbb R^{2d}}dxdy\,
\overline{\psi(x)}\psi(y)\varphi\left(\frac{x-q}{\alpha_\hbar}\right)
\overline{\varphi\left(\frac{y-q}{\alpha_\hbar}\right)}\exp\left[\frac{ip(x-y)}\hbar\right]\\=
\frac1{(\pi\hbar\alpha_\hbar)^d}
\int_{\mathbb R^{2d}}dq'dx\,
\overline{\psi(q'+x)}\psi(q'-x)\varphi\left(\frac{q'-q+x}{\alpha_\hbar}\right)
\overline{\varphi\left(\frac{q'-q-x}{\alpha_\hbar}\right)}\\\qquad\qquad\quad\:\times\exp\left[\frac{2ipx}\hbar\right]\\=
\int_{\mathbb R^{2d}}dq'dp'
\int_{\mathbb R^d}\frac{dx}{(\pi\hbar)^d}
\overline{\psi(q'+x)}\psi(q'-x)\exp\left[\frac{2ip'x}\hbar\right]
\\\qquad\quad\times
\int_{\mathbb R^d}\frac{dy}{(\pi\hbar\alpha_\hbar)^d}
\varphi\left(\frac{q'-q-y}{\alpha_\hbar}\right)
\overline{\varphi\left(\frac{q'-q+y}{\alpha_\hbar}\right)}\exp\left[\frac{2i(p'-p)y}\hbar\right]
\\=\int_{\mathbb R^{2d}}W_\psi(q',p')W^{(\hbar)}_{q,p}(q',p')\,dq'dp',
\end{eqnarray*}
Q.E.D.\end{proof}

Due to (\ref{EqWquni})--(\ref{EqWpuni}), (\ref{EqHusimiWigner}) can be rewritten as
\begin{equation*}
H_\psi(q,p)=\int_{\mathbb R^{2d}}W^{(\hbar)}_{0,0}(q'-q,p'-p)W_\psi(q',p')\,dq'dp'+o(1),\quad \hbar\to0.
\end{equation*}
As a corollary of this formula and Proposition~\ref{PropDelta}, we have the following.
\begin{proposition}
The existence of limit (\ref{EqSemiM}) is equivalent to the existence of limit (\ref{EqSemiMH}), and both limits coincide.
\end{proposition}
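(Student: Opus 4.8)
The plan is to reduce both the equivalence and the coincidence of the two limits to the single statement that, for every \emph{fixed} symbol $a\in C_0^\infty(\Omega)$,
\begin{equation*}
\int_\Omega H_\hbar(q,p)\,a(q,p)\,dq\,dp-\int_\Omega W_\hbar(q,p)\,a(q,p)\,dq\,dp\longrightarrow 0
\end{equation*}
as $\hbar\to0$. Granting this, the proposition follows at once: if limit (\ref{EqSemiM}) exists with some semiclassical measure $\mu$, then for each $a$ the Husimi pairing converges to the same number $\int_\Omega a\,d\mu$, so (\ref{EqSemiMH}) exists and equals it, and the converse implication is identical. Thus I only have to compare the two pairings for a single, $\hbar$-independent, test function.

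The starting point is the reformulation of (\ref{EqHusimiWigner}) displayed just above the proposition (obtained from Propositions \ref{PropSmear} and \ref{PropUni}), namely $H_\hbar(q,p)=\int_{\mathbb R^{2d}}W^{(\hbar)}_{0,0}(q'-q,p'-p)\,W_\hbar(q',p')\,dq'dp'+o(1)$. Inserting this into the Husimi pairing and interchanging the integrations — legitimate because $\varphi$, and hence the kernel $K_\hbar:=W^{(\hbar)}_{0,0}$, decays rapidly while $a$ has compact support — yields
\begin{equation*}
\int_\Omega H_\hbar\,a\,dq\,dp=\int W_\hbar(q',p')\,a_\hbar(q',p')\,dq'dp'+o(1),\qquad a_\hbar:=a\ast K_\hbar .
\end{equation*}
By (\ref{EqWdelta}) the kernel $K_\hbar$ is, up to the lattice $2\pi\mathbb Z^d$ (which is immaterial for the $2\pi$-periodic, compactly $p$-supported $a$; cf.\ Proposition \ref{PropDelta}), a \emph{signed} approximate identity concentrated in a region of size $\alpha_\hbar$ in $q$ and $\hbar/\alpha_\hbar$ in $p$, with uniformly bounded total variation $\int|K_\hbar|=\|f\|_{L^1(\mathbb R^{2d})}$ (the $\hbar$-independent function $f$ of (\ref{EqWdelta})) and total integral $\int K_\hbar=\|\upsilon^{(\hbar)}_{0,0}\|^2=1+o(1)$. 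Since convolution commutes with differentiation, $\partial^\beta a_\hbar=(\partial^\beta a)\ast K_\hbar$, and the standard approximate-identity estimate then gives $a_\hbar\to a$ together with all its derivatives, uniformly, with all supports contained in one fixed compact set; that is, $a_\hbar\to a$ in the topology of $C_0^\infty(\Omega)$.

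The hard part will be to pass from $a_\hbar\to a$ in $C_0^\infty(\Omega)$ to $\int W_\hbar\,(a_\hbar-a)\,dq\,dp\to0$. This is \emph{not} automatic from the definition (\ref{EqSemiM}) alone, because the symbol $a_\hbar$ itself depends on $\hbar$: weak convergence $W_\hbar\to\mu$ against a fixed symbol says nothing about pairings against a moving one, and $W_\hbar$ is only a distribution, not uniformly $L^1$-bounded, so sup-norm convergence of $a_\hbar-a$ does not suffice. What is needed is a continuity estimate for the Wigner functionals that is uniform in $\hbar$, of the form $\bigl|\int W_\hbar\,b\,dq\,dp\bigr|\le C\sum_{|\beta|\le N}\|\partial^\beta b\|_\infty$ with $C,N$ depending only on $d$. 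This is precisely the semiclassical Calder\'on--Vaillancourt estimate: writing the pairing as the expectation $\langle\psi_\hbar,\mathrm{Op}^{\mathrm W}_\hbar(b)\psi_\hbar\rangle$ of the Weyl quantization of $b$ on $L^2(\mathbb T^d)$, one has $\|\mathrm{Op}^{\mathrm W}_\hbar(b)\|\le C\sum_{|\beta|\le N}\|\partial^\beta b\|_\infty$ for $\hbar\le1$, while $\|\psi_\hbar\|^2\to1$. Applying it to $b=a_\hbar-a$ and using the uniform convergence of all derivatives from the previous step gives $\int W_\hbar\,(a_\hbar-a)\to0$, which closes the argument. (Alternatively, one may exploit that the Husimi distributions are genuine probability densities, $H_\hbar\ge0$ and $\int_\Omega H_\hbar=1$, to obtain tightness of $\{H_\hbar\}$ and argue along subsequences; but the route above delivers the coincidence of the two limits directly.)
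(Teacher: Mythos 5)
Your proposal is correct and follows the same skeleton as the paper --- both arguments rest on the convolution identity (\ref{EqHusimiWigner}) rewritten via Proposition~\ref{PropUni} as $H_\hbar = W^{(\hbar)}_{0,0}*W_\hbar + o(1)$, so that the Husimi pairing becomes a Wigner pairing against the mollified symbol $a*K_\hbar$. The difference is in how the final step is closed. The paper simply declares the proposition ``a corollary'' of this formula and of Proposition~\ref{PropDelta} (the kernel converges weakly to a Dirac mass), i.e.\ it implicitly passes from $K_\hbar\rightharpoonup\delta$ to the coincidence of the two limits without further comment. You correctly point out that this passage is not automatic: $W_\hbar$ is a distribution of unbounded total variation, so uniform smallness of $a*K_\hbar-a$ does not by itself control $\int W_\hbar\,(a*K_\hbar-a)$, and the symbol is moving with $\hbar$ so the definition (\ref{EqSemiM}) cannot be invoked directly. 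Your remedy --- the semiclassical Calder\'on--Vaillancourt bound $\bigl|\int W_\hbar\,b\bigr|\le C\sum_{|\beta|\le N}\|\partial^\beta b\|_\infty\,\|\psi_\hbar\|^2$ combined with convergence of $a*K_\hbar$ to $a$ in all $C^k$ norms --- is a standard and valid way to supply the missing uniform continuity, so your argument is actually more complete than the one in the paper. Two small points to tidy up: $a*K_\hbar$ is not compactly supported (the kernel in (\ref{EqWdelta}) only decays rapidly), so the claim that all supports lie in a fixed compact set should be replaced by rapid decay outside a fixed compact set, which is all the Calder\'on--Vaillancourt estimate and the pairing with $W_\hbar$ require; and the $o(1)$ inherited from Proposition~\ref{PropUni} needs to be uniform on the (essential) support of $a$ for the interchange of integrals to produce an $o(1)$ error in the pairing --- a point the paper also leaves implicit.
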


\begin{remark}\label{RemHusimi}
The Husimi distribution has a direct physical meaning. Consider the probability operator-valued measure $M$ defined as
\begin{equation*}
M(B)=\frac1{(2\pi\hbar)^d}\int_\Omega P[\upsilon^{(\hbar)}_{qp}]\,dqdp,
\end{equation*}
where $B\subset\Omega$ is an arbitrary Borel set on the phase space. According to the formalism of quantum mechanics \cite{Holevo}, it can be interpreted as an observable corresponding to simultaneous approximate measurements of position and momentum. Observables of this type were introduced by von Neumann \cite{Neumann}. His motivation was as follows. In classical mechanics, simultaneous measurements of position and momentum is possible. Hence, the correspondence principle requires this to be possible approximately also in quantum mechanics (with the errors of measurements tending to zero as $\hbar\to0$). If we choose the Gaussian function $\varphi(x)$ (\ref{EqGauss}), then the product of errors of measurements of position and momentum is $\hbar/2$, which minimizes the uncertainty relations. We see that the Husimi distribution is nothing else but the distribution of outcomes of such measurements.
\end{remark}

\section{Main results}\label{SecMain}

Let us denote
\begin{equation*}
\upsilon^{(\hbar)}_{q_0p_0,t}=\exp(-i\hbar t\Delta)\upsilon^{(\hbar)}_{q_0p_0}
\end{equation*}
--- a wave packet evolved on the time $t\in\mathbb R$,
\begin{equation}
H_{q_0,p_0,t}(q,p)=\frac1{(2\pi\hbar)^d}|(\upsilon_{qp},\upsilon_{q_0p_0,t})|^2
\end{equation}
--- the corresponding Husimi distribution. Also recall that  $T_\hbar$ denotes the revival time (\ref{EqT}).

\begin{theorem}\label{ThMain}
Consider a real-valued function $t_\hbar$ of $\hbar$ such that $\hbar (t_\hbar-AT_\hbar)\to0$,\\ $\hbar (t_\hbar-AT_\hbar)/\alpha_\hbar\to B$, where $A\in\mathbb R$, $B\in[0,\infty]$. Another possibility is $\hbar t_\hbar\to+\infty$. Then:

1) If $B=\infty$, or $A$ is irrational, or $\hbar t_\hbar\to+\infty$, then
\begin{equation}\label{EqLimFlat}
\lim_{\hbar\to0}H_{q_0,p_0,t_\hbar}(q,p)=
\frac1{(2\pi)^d}\,\delta(p-p_0);
\end{equation}

2) If $B<\infty$ and $A=\frac MN$ is rational (expressed as an irreducible fraction), then
\begin{eqnarray}\fl
\lim_{\hbar\to0}
\lbrace
H_{q_0,p_0,t_\hbar}(q,p)\nonumber\\-
\frac1{N'}\sum_{l\in[N']^d}\delta_B\left(q-q_0-2p_0(t_\hbar-AT_\hbar)-\Delta q_0+\frac{2\pi k}{N'}\right)\delta(p-p_0)\label{EqLim}
\rbrace=0.
\end{eqnarray}
Here
\begin{equation*}
\delta_B(q)=\frac1{(2\pi)^d}\sum_{j\in\mathbb Z^d}\sigma_{Bj}\exp(ijq),\\
\end{equation*}
\begin{equation*}
\sigma_{R}=\int_{\mathbb R^d}\varphi(x)\overline{\varphi(x-2R)}\,dx,
\end{equation*}
$[N']=\{0,1,2,\ldots,N'-1\}$,
\begin{equation*}
N'=\cases{
N,&odd $N$,\\
\frac N2,&even $N$,}
\qquad\qquad
\Delta q_0=\cases{\frac{2\pi}{N} I,&$N\equiv2\pmod 4$,\\
0,&otherwise,}
\end{equation*}
where $I=(1,1,\ldots,1)\in\mathbb Z^d$.
\end{theorem}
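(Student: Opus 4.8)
The plan is to evaluate the overlap $(\upsilon_{qp},\upsilon_{q_0p_0,t_\hbar})$ in the Fourier representation, where the evolution is diagonal, and then to square it and pass to the limit. By Poisson summation the Fourier coefficients of the torus coherent state $\upsilon^{(\hbar)}_{qp}$ equal $(2\pi)^{-d/2}\alpha_\hbar^{d/2}e^{-ikq}\tilde\varphi(\alpha_\hbar(p/\hbar-k))$, where $\tilde\varphi$ is the Fourier transform of $\varphi$, so that inserting the evolution factor $e^{-i\hbar t_\hbar k^2}$ on each mode gives
\[
(\upsilon_{qp},\upsilon_{q_0p_0,t_\hbar})=\frac{\alpha_\hbar^d}{(2\pi)^d}\sum_{k\in\mathbb Z^d}e^{ik(q-q_0)}\,\overline{\tilde\varphi\left(\alpha_\hbar\left(\frac p\hbar-k\right)\right)}\,\tilde\varphi\left(\alpha_\hbar\left(\frac{p_0}\hbar-k\right)\right)e^{-i\hbar t_\hbar k^2}.
\]
Because $\tilde\varphi$ decreases rapidly, the two amplitudes force $k$ to lie within $O(\alpha_\hbar^{-1})$ of both $p/\hbar$ and $p_0/\hbar$; hence the overlap is negligible unless $|p-p_0|=O(\hbar/\alpha_\hbar)\to0$, which upon testing against $a\in C_0^\infty$ produces the universal factor $\delta(p-p_0)$ and reduces everything to the $q$-dependence at $p=p_0$.

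The decisive step is the quadratic phase, which I split as $\hbar t_\hbar k^2=2\pi Ak^2+\hbar(t_\hbar-AT_\hbar)k^2$. When $A$ is irrational the fractional parts of $Ak^2$ equidistribute (by Weyl equidistribution), decorrelating the sum and leaving only a diagonal, $q$-independent contribution; together with the cases $B=\infty$ and $\hbar t_\hbar\to+\infty$, where the quadratic phase varies without bound across the packet and spreads it over the whole torus, this yields the flat profile (\ref{EqLimFlat}). When $A=M/N$ is rational, $e^{-2\pi iAk^2}$ is periodic in $k$ with period $N$, so I would split $k$ into residue classes modulo $N$ and evaluate the resulting quadratic Gauss sums. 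Completing the square shows that for odd $N$ all $N$ Fourier coefficients have equal modulus, whereas for even $N$ the pairing $k\leftrightarrow k+N/2$ annihilates half of them; the surviving frequencies, together with the parity of $N/2$, produce exactly $N'$ equally weighted copies of the packet at $q_0+2\pi l/N'$ and the half-cell shift $\Delta q_0$ in the case $N\equiv2\pmod4$.

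The residual factor $e^{-i\hbar(t_\hbar-AT_\hbar)k^2}$ carries the sub-Ehrenfest-time dependence. Expanding $k$ about its centre $p_0/\hbar$, the linear part contributes the rigid classical drift $2p_0(t_\hbar-AT_\hbar)$ (velocity $2p_0$ for the Hamiltonian $p^2$), while the spread of group velocities across the $O(\alpha_\hbar^{-1})$-wide packet, governed precisely by $\hbar(t_\hbar-AT_\hbar)/\alpha_\hbar\to B$, disperses each copy over a scale of order $B$. Carrying out the resulting overlap integral of the dispersed packet against the probe coherent state produces the autocorrelations $\sigma_{Bj}=\int\varphi(x)\overline{\varphi(x-2Bj)}\,dx$ and hence the smeared profile $\delta_B$; at $B=0$ one has $\sigma_0=\|\varphi\|^2=1$ and $\delta_0=\delta$, while as $B\to\infty$ the off-diagonal $\sigma_{Bj}$ vanish and $\delta_B$ flattens, matching part 1). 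I would then check that cross terms between distinct copies are $o(1)$ (their centres stay a fixed distance $2\pi/N'$ apart, so the coherent-state overlaps vanish), collect the $N'$ diagonal terms with weight $1/N'$, and verify the normalization $\sigma_0=1$.

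The hard part will be the coupled bookkeeping of the two middle steps: extracting the exact Gauss-sum combinatorics ($N'$ and $\Delta q_0$) while keeping the residual quadratic phase uniformly controlled, so that it contributes only the finite-$B$ dispersion and not additional decay, and justifying the interchange of the $\hbar\to0$ limit with the infinite $k$-sum and with the test-function integration in the distributional statement (\ref{EqLim}), including the uniform smallness of the cross terms.
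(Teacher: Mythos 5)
Your outline is essentially sound and would reach the theorem, but it takes a genuinely different route from the paper's at the decisive combinatorial step. You work at the amplitude level: decompose $e^{-2\pi i(M/N)k^2}$ over residue classes mod $N$, evaluate quadratic Gauss sums to determine which of the $N$ translated copies survive and with what offset, and only then square. The paper never evaluates a Gauss sum: it computes the Fourier transform of the Husimi distribution in $(q,p)$ directly, i.e.\ the quantity $\int_\Omega|(\upsilon_{qp},\upsilon_{q_0p_0,t_\hbar})|^2e^{ijq+i\xi p}\,dqdp$. Squaring first means the quadratic phases enter only through the difference $\hbar t_\hbar(n^2-k^2)$ with $n=k+j$ fixed by the $q$-mode, which is \emph{linear} in the summation index $k$ plus a residual $\hbar t_\hbar j^2$. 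Poisson summation over $k$ then yields the selection rule $2\hbar t_\hbar j+2\pi k=O(\alpha_\hbar)$, forcing $j\in N'\mathbb Z^d$ when $A=M/N$ (and $j=0$ when $A$ is irrational or $B=\infty$), while the residual phase $e^{i\hbar t_\hbar j^2}$ evaluated mod $2\pi$ produces $\Delta q_0$, and the same Poisson delta delivers the autocorrelations $\sigma_{Bj}$ without any stationary-phase analysis of a chirped packet. This buys a much lighter treatment of the irrational case (nonexistence of integer solutions of $2Aj+k=0$ replaces equidistribution of $\{Ak^2\}$) and removes the Gauss-sum bookkeeping; your route, conversely, makes the physical picture of $N'$ interfering copies explicit, and your Gauss-sum parity analysis (survivors at even or odd residues according to $N\bmod 4$) does reproduce $N'$ and $\Delta q_0$ correctly.

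One step of your plan would fail as justified. You propose to discard the cross terms between distinct copies because ``their centres stay a fixed distance $2\pi/N'$ apart, so the coherent-state overlaps vanish.'' That holds only when the dispersed width is small compared with $2\pi/N'$. For $B>0$ each copy spreads to spatial width of order $B$, which may well exceed the spacing $2\pi/N'$, and the copies then overlap substantially; indeed the limiting profile $\frac1{N'}\sum_l\delta_B(\cdot+2\pi l/N')$ is a sum of overlapping humps. The cross terms do vanish in the limit, but for a different reason: the dispersed amplitudes carry a quadratic chirp, so the product of two copies centred at $c_s\neq c_{s'}$ oscillates in $q$ at frequency of order $|c_s-c_{s'}|/\bigl(\hbar(t_\hbar-AT_\hbar)\bigr)\to\infty$ and averages out against the test function (equivalently, on the Fourier side these terms sit at $q$-frequencies $j\notin N'\mathbb Z^d$ and are killed by the selection rule). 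You need this argument, or its Fourier-side equivalent, to close the proof; spatial separation alone does not cover the range $B\in(0,\infty)$.
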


\begin{theorem}\label{ThMu}
Semiclassical measures $\mu$ corresponding to time-evolved wave packets $\upsilon^{(\hbar)}_{qp,t_\hbar}$ are:

1) Let $t_\hbar$ is as in Theorem~\ref{ThMain}; $B=\infty$, or $A$ is irrational, or $\hbar t_\hbar\to+\infty$. Then

\begin{equation}\label{EqMuFlat}
\mu(dqdp)=\frac1{(2\pi)^d}\,\delta(p-p_0)\,dqdp;
\end{equation}

2) Let $t_\hbar$ is as in Theorem~\ref{ThMain}; $B<\infty$, $A=\frac MN$, $p_0=0$. Then
\begin{equation*}
\mu(dqdp)=
\frac1{N'}\sum_{k\in[N']^d}
\delta_B\left(q-q_0-\Delta q_0+\frac{2\pi k}{N'}\right)\delta(p)\,dqdp;
\end{equation*}

3) Let $t_\hbar$ be a real-valued function of $\hbar$ such that $t_\hbar-\frac MNT_\hbar\to\tau\in\mathbb R$. Then

\begin{equation*}
\mu(dqdp)=
\frac1{N'}\sum_{k\in[N']^d}
\delta\left(q-q_0-2p_0\tau-\Delta q_0+\frac{2\pi k}{N'}\right)\delta(p-p_0)\,dqdp.
\end{equation*}
\end{theorem}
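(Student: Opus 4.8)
The plan is to read Theorem~\ref{ThMu} off directly from Theorem~\ref{ThMain}, using the fact---established in the proposition that equates definitions (\ref{EqSemiM}) and (\ref{EqSemiMH})---that the semiclassical measure $\mu$ associated with the family $\{\upsilon^{(\hbar)}_{q_0p_0,t_\hbar}\}$ is precisely the weak-$*$ limit of the Husimi distributions $H_{q_0,p_0,t_\hbar}$ tested against $a\in C_0^\infty(\Omega)$. Thus, once Theorem~\ref{ThMain} supplies the weak limit of $H_{q_0,p_0,t_\hbar}$ in a given time regime, the measure $\mu$ is obtained by integrating that limit against $a$, provided the limit is a genuine fixed measure. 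The whole argument therefore reduces to identifying, in each of the three cases, the limiting object and checking that it does not depend on $\hbar$.

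Case 1) is immediate. By part 1) of Theorem~\ref{ThMain}, expression (\ref{EqLimFlat}) gives $H_{q_0,p_0,t_\hbar}(q,p)\to(2\pi)^{-d}\delta(p-p_0)$ weakly, so for every test function $a$ the limit (\ref{EqSemiMH}) equals $(2\pi)^{-d}\int_{\mathbb T^d}a(q,p_0)\,dq$; this identifies $\mu$ as in (\ref{EqMuFlat}). I would also note that this is a probability measure, consistent with $\mathrm{vol}(\mathbb T^d)=(2\pi)^d$ and with the normalization of the Husimi distribution.

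Cases 2) and 3) both descend from part 2) of Theorem~\ref{ThMain}, and the only real point to address is that the reference object subtracted in (\ref{EqLim}) still depends on $\hbar$ through the drift $2p_0(t_\hbar-AT_\hbar)$. Denoting that object by $R_\hbar$, Theorem~\ref{ThMain} asserts $\int_\Omega(H_{q_0,p_0,t_\hbar}-R_\hbar)a\,dqdp\to0$, so $\mu$ exists and equals $\lim_\hbar R_\hbar$ exactly when $R_\hbar$ itself converges weakly, that is, when the drift converges modulo $2\pi$ (recall $\delta_B$ is already fixed, being built from the limiting value $B$). In case 2) the hypothesis $p_0=0$ annihilates the drift, leaving $R_\hbar$ independent of $\hbar$ and equal to the stated superposition of shifted copies of $\delta_B$. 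In case 3) the stronger hypothesis $t_\hbar-\frac MN T_\hbar\to\tau$ forces $B=\lim_\hbar\hbar(t_\hbar-AT_\hbar)/\alpha_\hbar=0$, whence $\sigma_{Bj}=\sigma_0=\|\varphi\|^2=1$ for all $j$ and $\delta_B$ collapses to the torus Dirac mass $\delta$, while the drift converges to $2p_0\tau$; substituting gives the claimed measure with no constraint on $p_0$.

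The subtle point---and really the reason the hypotheses split the way they do---is exactly this drift: when $p_0\neq0$ and $t_\hbar-AT_\hbar\to\infty$ (which occurs for finite $B>0$, since $\hbar/\alpha_\hbar\to0$ gives $\alpha_\hbar/\hbar\to\infty$), the quantity $2p_0(t_\hbar-AT_\hbar)$ generically fails to converge modulo $2\pi$, so no single limiting measure exists; the conditions $p_0=0$ in case 2) and $\tau$ finite in case 3) are precisely the two ways to restore convergence. The remaining item I would verify is that each limiting object is a bona fide positive measure rather than a more singular distribution. This is automatic from the positivity and unit mass of $H_{q_0,p_0,t_\hbar}$ together with concentration of the $p$-marginal at $p_0$, which makes the family tight and its weak limit a probability measure; as a consistency check one sees directly that $\delta_B$ is positive, since $\{\sigma_{Bj}\}_{j\in\mathbb Z^d}$ is the restriction to the integer lattice of the autocorrelation $\xi\mapsto\int_{\mathbb R^d}\varphi(x)\overline{\varphi(x-2B\xi)}\,dx$, whose continuous Fourier transform is $|\hat\varphi|^2\ge0$, so that by the Herglotz--Bochner theorem $\delta_B$ is the density of a positive measure on $\mathbb T^d$.
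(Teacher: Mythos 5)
Your proposal is correct and follows exactly the route the paper intends: the paper states only that Theorem~\ref{ThMu} is a direct corollary of Theorem~\ref{ThMain}, and your elaboration --- reading off case 1) from (\ref{EqLimFlat}), and for cases 2) and 3) identifying the drift $2p_0(t_\hbar-AT_\hbar)$ as the sole obstruction to convergence, removed by $p_0=0$ or by $t_\hbar-\frac MN T_\hbar\to\tau$ (which also forces $B=0$ and collapses $\delta_B$ to the periodic Dirac mass) --- is precisely the reasoning the author gives in the discussion of generalizations of semiclassical measures. The extra verification of positivity via Bochner's theorem is a sound but inessential addition.
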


\begin{theorem}\label{ThTime}
Consider the function $t_\hbar=\lambda_\hbar t$, where $\lambda_\hbar\to\infty$ as $\hbar\to0$. 

1) If $\hbar\lambda_\hbar/\alpha_\hbar\to\infty$ as $\hbar\to0$, then, for all functions $a\in C^\infty_0(\Omega)$ and $b\in L^1(\mathbb R)$, there exists the limit
\begin{equation}\label{EqLimt}
\lim_{\hbar\to0}\int_{\Omega\times\mathbb R}a(q,p)b(t)H_{q_0,p_0,\lambda_\hbar t}(q,p)\,dqdpdt=\langle a\rangle(p_0)\int_{\mathbb R}b(t)\,dt,
\end{equation}
where 
\begin{equation*}
\langle a\rangle(p_0)=\frac1{(2\pi)^d}\int_{\mathbb T^d}a(q,p_0)\,dq.
\end{equation*}

2) If $\hbar\lambda_\hbar/\alpha_\hbar\to B\in[0,+\infty)$, then, for all functions $a\in C^\infty_0(\Omega)$ and $b\in L^1(\mathbb R)$, there exists the limit
\begin{equation}\label{EqLimt2}\fl
\lim_{\hbar\to0}\int_{\Omega\times\mathbb R}a(q,p)b(t)H_{q_0,p_0,\lambda_\hbar t}(q,p)\,dqdpdt=\lim_{T\to\infty}\frac1T\int_0^T a^{(b,B)}(q+p_0t,p_0)\,dt,
\end{equation}
where
\begin{equation*}\fl
a^{(b,B)}(q,p)=\frac1{(2\pi)^{\frac d2}}\sum_{j\in\mathbb Z^d}a_j(p)\left[\int_{\mathbb R^d}dx\int_{-\infty}^{+\infty}b(t)\varphi(x)\overline{\varphi(x-2Btj)}\,dt\right]\exp(ijq).
\end{equation*}
In particular, if $B=0$, then
\begin{equation}\label{EqLimt3}\fl
\lim_{\hbar\to0}\int_{\Omega\times\mathbb R}a(q,p)b(t)H_{q_0,p_0,\lambda_\hbar t}(q,p)\,dqdpdt=\lim_{T\to\infty}\frac1T\int_0^T a(q+p_0t,p_0)dt
\int_{-\infty}^{+\infty}b(t)dt.
\end{equation}
If $p_0$ does not belong to the ``resonant'' set
\begin{equation}\label{EqReson}
R=\{p\in\mathbb R^d\,|\,jp=0  \textrm{ for some }  j\in\mathbb Z^d\backslash\{0\}\},
\end{equation}
then limits (\ref{EqLimt2}) and (\ref{EqLimt3}) are reduced to (\ref{EqLimt}).
\end{theorem}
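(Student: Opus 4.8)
The plan is to insert the explicit large-time form of the Husimi distribution obtained in the proof of Theorem~\ref{ThMain} and then exploit the rapid winding of the classical center around the torus. In case 2) one has $\hbar\lambda_\hbar=\alpha_\hbar(\hbar\lambda_\hbar/\alpha_\hbar)\to0$, so for each fixed $t$ the product $\hbar t_\hbar=\hbar\lambda_\hbar t\to0$; hence $\lambda_\hbar t$ sits at the revival fraction $A=0$ (so $M/N=0/1$, $N'=1$, $\Delta q_0=0$) with finite scaled spreading $\beta_\hbar(t):=\hbar\lambda_\hbar t/\alpha_\hbar\to Bt$, and Theorem~\ref{ThMain} gives $H_{q_0,p_0,\lambda_\hbar t}(q,p)=\delta_{\beta_\hbar(t)}(q-q_0-2p_0\lambda_\hbar t)\delta(p-p_0)+o(1)$, a single spread peak whose center $q_0+2p_0\lambda_\hbar t$ runs around $\mathbb T^d$ ever faster as $\hbar\to0$.

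For case 1) spreading wins outright. For almost every fixed $t\neq0$, the time $\lambda_\hbar t$ triggers Theorem~\ref{ThMain} part 1) --- via $B=\infty$ at $A=0$ if $\hbar\lambda_\hbar\to0$, via an irrational revival fraction if $\hbar\lambda_\hbar$ has a positive finite limit and $t$ is generic, or via $\hbar t_\hbar\to\infty$ if $\hbar\lambda_\hbar\to\infty$ --- and in every case $H_{q_0,p_0,\lambda_\hbar t}\to(2\pi)^{-d}\delta(p-p_0)$ weakly. Pairing with $a$ erases the moving center and gives $\int_\Omega a\,H\,dqdp\to\langle a\rangle(p_0)$ for a.e. $t$. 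Since $\int_\Omega H\,dqdp=1$, the inner integral is bounded by $\|a\|_\infty$, so $\|a\|_\infty|b|\in L^1$ dominates and (\ref{EqLimt}) follows by dominated convergence.

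In case 2) the center stays relevant and, crucially, the pointwise-in-$t$ limit fails to exist because the factor $e^{2i(jp_0)\lambda_\hbar t}$ keeps oscillating; the integration against $b$ is what regularizes it. Expanding $a(q,p_0)=\sum_j\hat a_j e^{ijq}$ and convolving with $\delta_{\beta_\hbar(t)}$ turns the inner integral into $\sum_j\hat a_j\,\sigma_{\beta_\hbar(t)j}\,e^{ijq_0}\,e^{2i(jp_0)\lambda_\hbar t}$ (up to routine sign conventions); multiplying by $b(t)$ and integrating makes each summand the value at frequency $2(jp_0)\lambda_\hbar$ of the Fourier transform of the $L^1$ function $t\mapsto b(t)\sigma_{\beta_\hbar(t)j}$. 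Letting $\hbar\to0$ with $\beta_\hbar(t)\to Bt$ and invoking the Riemann--Lebesgue lemma, the modes with $jp_0\neq0$ die (because $\lambda_\hbar\to\infty$), while the resonant modes $jp_0=0$ survive and contribute $e^{ijq_0}\int\!\!\int b(t)\varphi(x)\overline{\varphi(x-2Btj)}\,dx\,dt$. Summing them reproduces the resonant part of $a^{(b,B)}$, i.e. the Weyl average $\lim_{T\to\infty}\frac1T\int_0^T a^{(b,B)}(q_0+p_0t,p_0)\,dt$ on the right-hand side of (\ref{EqLimt2}); setting $B=0$ makes $\sigma\equiv1$ and yields (\ref{EqLimt3}), and if $p_0\notin R$ only $j=0$ remains, returning $\langle a\rangle(p_0)\int b$ and hence (\ref{EqLimt}).

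I expect the main obstacle to be analytic rather than structural: making the $o(1)$ of Theorem~\ref{ThMain} uniform enough in the parameter $t$ to survive the $t$-integration. Since $b$ is only $L^1$, I would split $\mathbb R$ into a large compact set, where the estimates underlying Theorem~\ref{ThMain} and the convergence $\beta_\hbar(t)\to Bt$ hold uniformly in $t$, and a tail, where the uniform bound $|\sigma_{\beta_\hbar(t)j}|\le1$ lets the $L^1$-smallness of $b$ absorb the remainder. Justifying the interchange of $\lim_{\hbar\to0}$ with the $t$-integral, and checking that the $t$-dependence of the width $\beta_\hbar(t)$ does not obstruct the Riemann--Lebesgue step, are the points that need the most care.
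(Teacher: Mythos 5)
Your proposal is correct and follows essentially the same route as the paper: case 1) by applying Theorem~\ref{ThMain} pointwise for a.e.\ $t$ (distinguishing the subcases $\hbar\lambda_\hbar\to0$, $\to c$, $\to\infty$ exactly as the paper does) and concluding by dominated convergence, and case 2) by reducing to the $k=0$ part of the intermediate formula (\ref{EqFinal}) and using the Riemann--Lebesgue lemma to kill the non-resonant modes $jp_0\neq0$, leaving the Weyl average of $a^{(b,B)}$. The uniformity-in-$t$ issues you flag at the end are genuine but are passed over just as silently in the paper's own proof.
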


\begin{proof}[Proof of Theorem~\ref{ThMain}]
Let us find the Fourier coefficients of coherent states
 $\upsilon^{(\hbar)}_{q_0p_0}(x)$:
\begin{equation}
\upsilon^{(\hbar)}_{q_0p_0}(x)=
\frac1{{(2\pi)}^{\frac d2}}\sum_{k\in\mathbb Z^d}c^{(\hbar)}_{k,qp}\exp(ikx).
\end{equation} 
We have
\begin{eqnarray*}
c^{(\hbar)}_{k,qp}&=\frac1{(2\pi)^{\frac d2}}\int_{\mathbb T^d}\upsilon^{(\hbar)}_{qp}(x)\exp(-ikx)\,dx\\&=
\frac1{(2\pi)^{\frac d2}}\sum_{n\in\mathbb Z^d}\int_{\mathbb T^d}\eta^{(\hbar)}_{qp}(x-2\pi n)\exp[-ik(x-2\pi n)]\,dx\\&=
\frac1{(2\pi)^{\frac d2}}\int_{\mathbb R^d}\eta^{(\hbar)}_{qp}(x)\exp(-ikx)\,dx\\&=
\left(\frac{\alpha_\hbar}{2\pi}\right)^{\frac d2}\int_{\mathbb R^d}\varphi(x)
\exp\left\lbrace i(\alpha_\hbar x+q)\left(\frac p\hbar-k\right)-\frac{ipq}\hbar\right\rbrace dx.
\end{eqnarray*}
Using this formula, we can calculate the scalar product
\begin{equation*}
(\upsilon_{qp},\upsilon_{q_0p_0,t})=
\sum_{k\in\mathbb Z^d}
\overline{c^{(\hbar)}_{k,qp}}c^{(\hbar)}_{k,q_0p_0}\exp(-i\hbar tk^2).
\end{equation*}
Let a function $a(q,p)\in C^\infty_0(\Omega)$ is expanded into the Fourier series and the Fourier integral as follows:
\begin{equation*}\fl
a(q,p)=\frac1{(2\pi)^{\frac d2}}
\sum_{j\in\mathbb Z^d}a_j(p)\exp(ijq)=
\frac1{(2\pi)^d}\sum_{j\in\mathbb Z^d}\int_{\mathbb R^d}
 \tilde a_j(\xi)\exp(ijq+i\xi p)d\xi.
\end{equation*}
Then
\begin{eqnarray}\label{EqAFourier}\fl
\int_\Omega H_{q_0,p_0,t_\hbar}^{(\hbar)}(q,p)a(q,p)\,dqdp\\=
\frac1{(2\pi\sqrt\hbar)^{2d}}\sum_{j\in\mathbb Z^d}\int_{\mathbb R^d}d\xi \tilde a_j(\xi)
\int_\Omega 
|(\upsilon_{qp},\upsilon_{q_0p_0,t_\hbar})|^2\exp(ijq+i\xi p)\,dqdp.
\end{eqnarray}
Let us calculate 
\begin{eqnarray*}
\frac1{(2\pi\hbar)^d}\int_\Omega 
|(\upsilon_{qp},\upsilon_{q_0p_0,t_\hbar})|^2\exp(ijq+i\xi p)dqdp=\\=
\frac1{(2\pi\hbar)^d}\left(\frac{\alpha_\hbar}{2\pi}\right)^{2d}
\sum_{k,n\in\mathbb Z^d}
\int_{\mathbb R^{4d}}dxdx'dydy'\int_\Omega dqdp\,
\overline{\varphi(x)}\varphi(y)\varphi(x')\overline{\varphi(y')}
\quad\\\times
\exp\left\lbrace
-i(\alpha_\hbar x+q)\left(\frac p\hbar-k\right)
+i(\alpha_\hbar y+q_0)\left(\frac {p_0}\hbar-k\right)-iht_\hbar k^2
+\right.\\\left.
\quad+i(\alpha_\hbar x'+q)\left(\frac p\hbar-n\right)
-i(\alpha_\hbar y'+q_0)\left(\frac {p_0}\hbar-n\right)+iht_\hbar n^2
+ijq+i\xi p
\right\rbrace.
\end{eqnarray*}
The integration over $p$ yields the factor
\begin{equation*}
(2\pi)^d\delta\left(\xi-\frac{\alpha_\hbar}\hbar(x-x')\right)=
\left(\frac{2\pi\hbar}{\alpha_\hbar}\right)^d
\delta\left(x'-x+\frac{\hbar\xi}{\alpha_\hbar}\right).
\end{equation*}
The integration over $q$ yields the factor
$(2\pi)^d\delta_{j+k-n}$, where $\delta_x$ is the Kronecker symbol ($\delta_x=1$ if $x=0$ and $\delta_x=0$ otherwise). Thus, the integration over $x'$ and the summation over $n$ can be eliminated with the substitutions 
$x'=x-\frac{\hbar\xi}{\alpha_\hbar}$ and $n=k+j$. We have
\begin{eqnarray*}\fl
\frac1{(2\pi\hbar)^d}\int_\Omega 
|(\upsilon_{qp},\upsilon_{q_0p_0,t_\hbar})|^2\exp(ijq+i\xi p)dqdp\\=
\left(\frac{\alpha_\hbar}{2\pi}\right)^{d}
\sum_{k\in\mathbb Z^d}
\int_{\mathbb R^{3d}}dxdydy'\,
\overline{\varphi(x)}\varphi(y)\varphi\left(x-\frac{\hbar\xi}{\alpha_\hbar}\right)\overline{\varphi(y')}
\\\times
\exp\Big\lbrace
\frac{i\alpha_\hbar}{\hbar}p_0(y-y')+i k[\alpha_\hbar(y'-y)+2\hbar t_\hbar j+\hbar\xi] - i\alpha_\hbar(x-y')j\\
\qquad\:\:+ij(q_0+i\hbar t_\hbar j)+i\hbar j\xi
\Big\rbrace.
\end{eqnarray*}
Here we can drop the infinitesimal terms $i\hbar j\xi$ and $i\alpha_\hbar(x-y')j$ in the exponent. Also note that
\begin{equation*}
\lim_{\hbar\to0}\int_{\mathbb R^d}\overline{\varphi(x)}\varphi\left(x-\frac{\hbar\xi}{\alpha_\hbar}\right)dx=\int_{\mathbb R^d}|\varphi(x)|^2dx=1.
\end{equation*}
 Further,
the summation over $k$ yields the product
\begin{eqnarray*}
&(2\pi)^d\sum_{k\in\mathbb Z^d}
\delta(\alpha_\hbar(y'-y)+2\hbar t_\hbar j+2\pi k+\hbar\xi)\\=
&\left(\frac{2\pi}{\alpha_\hbar}\right)^d\sum_{k\in\mathbb Z^d}\delta\left(y'-y+\frac{2\hbar t_\hbar j+2\pi k+\hbar\xi}{\alpha_\hbar}\right).
\end{eqnarray*}
By the elimination of the integration over $y'$ with the substitution $y'=y-(2\hbar t_\hbar j+2\pi k+\hbar\xi)/\alpha_\hbar$, we obtain
\begin{eqnarray*}\fl
\lim_{\hbar\to0}\Big[\frac1{(2\pi\hbar)^d}\int_\Omega 
|(\upsilon_{qp},\upsilon_{q_0p_0,t_\hbar})|^2\exp(ijq+i\xi p)dqdp-\\
\sum_{k\in\mathbb Z^d}
\int_{\mathbb R^d}dy\,\varphi(y)\overline{\varphi\left(y-\frac{2\hbar t_\hbar j+2\pi k+\hbar\xi}{\alpha_\hbar}\right)}\\
\qquad\quad\:\:\times
\exp\left\lbrace
ij(q_0+2p_0t_\hbar)+i\xi p_0+2\pi k\frac{p_0}\hbar+i\hbar t_\hbar j^2
\right\rbrace\Big]=0.
\end{eqnarray*}
If we drop the infinitesimal term $\hbar\xi/\alpha_\hbar$ in the argument of the function $\overline\varphi$ and substitute the result into formula (\ref{EqAFourier}), we will obtain
\begin{eqnarray}\fl
\lim_{\hbar\to0}\Big[
\int_\Omega H^{(\hbar)}_{q_0,p_0,t_\hbar}(q,p)a(q,p)\,dqdp\nonumber\\
-\frac1{(2\pi)^{\frac d2}}\sum_{j,k\in\mathbb Z^d}a_j(p_0)\int_{\mathbb R^d}dx\,\varphi(x)\overline{\varphi\left(x-\frac{2\hbar t_\hbar j+2\pi k}{\alpha_\hbar}\right)}\nonumber\\\qquad\qquad\qquad\times
\exp\left\lbrace
ij(q_0+2p_0t_\hbar)+2\pi k\frac{p_0}\hbar+i\hbar t_\hbar j^2
\right\rbrace\Big]=0.\label{EqFinal}
\end{eqnarray}

Now consider all limiting cases. At first, as we see, we have the Dirac measure for the momentum, which was expected since the momentum conservation.

If $A$ is a whole number and $B<\infty$, then, in the summation over $k$ in (\ref{EqFinal}), only the term with $k=-2A$ holds (otherwise the integral over $x$ tends to zero due to rapid decrease of $\varphi$). We can see that we obtain formula (\ref{EqLim}) for the corresponding case. 

If $B=\infty$, or $A$ is irrational, or $\hbar t_\hbar\to\infty$, then, in the double sum in (\ref{EqFinal}), only the term with $j=k=0$ remains non-zero in the limit. This corresponds to the uniform spatial distribution. So, we obtain formula (\ref{EqLimFlat}).

Let now $A=\frac MN$ (rational number expressed as an irreducible fraction) and $B<\infty$. The integral in (\ref{EqFinal}) does not tend to zero if and only if  $\frac{2Mj}N+k=0$. Accordingly, in the summation over $j$, only terms with $j=N'\ell$, $\ell\in\mathbb Z^d$ remains non-zero in the limit, where $N'=N$ for odd $N$ and $N'=\frac N2$ for even $N$. In the summation over $k$, only the term with $k=-\frac{2Mj}N$ remains non-zero in the limit.

Consider the term $i\hbar t_\hbar j^2$ in the exponent in the right-hand side of (\ref{EqFinal}). If $N$ is odd then 
\begin{equation*}
\hbar t_\hbar j^2\sim2\pi\frac MN(N\ell)^2=2\pi MN\ell^2\in2\pi\mathbb Z^d
\end{equation*}
(we write $f\sim g$ whenever $\lim\frac fg=1$) and this term may be dropped. If $N$ is even, then 
\begin{equation*}
\hbar t_\hbar j^2\sim2\pi \frac MN\left(\frac{N\ell}2\right)^2=\pi\frac{MN\ell^2}2.
\end{equation*}
If $N$ is divisible by four, then this number again belongs to $2\pi\mathbb Z^d$ and may be dropped. If $N$ is even, but not divisible by four, then $M$ is odd and 
\begin{equation*}
\exp\left\lbrace i\hbar t_\hbar j^2 \right\rbrace \sim\exp\left\lbrace i\pi\frac{MN\ell^2}2\right\rbrace=(-1)^{N'\ell I}=
\exp\left\lbrace i\pi N'\ell I\right\rbrace.
\end{equation*}
Thus, the second term in the limiting expression in (\ref{EqFinal}) can be rewritten as
\begin{eqnarray}\label{EqSums}
\frac1{(2\pi)^{\frac d2}}&\sum_{\ell\in\mathbb Z^d}\sigma_{BN'\ell}a_{N'\ell}(p_0)
\exp\left\lbrace
iN'\ell[q_0+2p_0(t_\hbar-AT_\hbar)]+\gamma i\pi N'\ell I
\right\rbrace\nonumber\\=
\frac1{N'}&\sum_{k\in[N']^d}a^{(B)}\left(q_0+2p_0(t_\hbar-AT_\hbar)+\gamma\pi I+\frac{2\pi k}{N'},p_0\right)\nonumber\\=
\frac1{N'}&\sum_{k\in[N']^d}a^{(B)}\left(q_0+2p_0(t_\hbar-AT_\hbar)+\Delta q_0+\frac{2\pi k}{N'},p_0\right),
\end{eqnarray}
where $\gamma=1$ if $N\equiv2\pmod 4$ and $\gamma=0$ otherwise; 
\begin{equation}\label{Eqab}
a^{(B)}(q,p)=\frac1{(2\pi)^{\frac d2}}\sum_{j\in\mathbb Z^d}\sigma_{Bj}a_j(p)\exp(ijq).
\end{equation}
To verify the first equality in (\ref{EqSums}), we can use formula (\ref{Eqab}) and see that all terms except $j=N'\ell$ cancel. Replacement of $\pi$ by $\frac{2\pi}N$ in the second equality in (\ref{EqSums}) (recall that $\Delta q_0=\gamma\frac{2\pi}NI$) is valid since $k=(N'+1)/2$,
\begin{equation*}
\pi+\frac{2\pi k}{N'}=\pi+\frac{2\pi}{N'}\frac{N'+1}2=2\pi+\frac{2\pi}N,
\end{equation*}
and $(2\pi\mathbb Z^d)$-periodicity of $a^{(B)}(q,p)$ with respect to $q$. Finally, we obtain
\begin{eqnarray*}\fl
\lim_{\hbar\to0}\bigg[
\int_\Omega H^{(\hbar)}_{q_0,p_0,t_\hbar}(q,p)a(q,p)\,dqdp\\-\frac1{N'}\sum_{k\in[N']^d}a^{(B)}\left(q_0+2p_0(t_\hbar-AT_\hbar)+\Delta q_0+\frac{2\pi k}{N'},p_0\right)\bigg],
\end{eqnarray*}
i.e., formula (\ref{EqLim}). Thus, the theorem has been entirely proved.
\end{proof}

Theorem~\ref{ThMu} is a direct corollary of Theorem~\ref{ThMain}.

\begin{proof}[Proof of Theorem~\ref{ThTime}]
Consider the first case. According to Theorem~\ref{ThMain}, 
\begin{equation*}
\lim_{\hbar\to0}\int_\Omega H_{q_0,p_0,\lambda_\hbar t}^{(\hbar)}(q,p)a(q,p)dqdp=
\langle a\rangle(p_0)
\end{equation*}
for all $t$ (if $\hbar \lambda_\hbar\to0$ or $\hbar \lambda_\hbar\to\infty$) or for irrational $t$ (if $\hbar \lambda_\hbar\to c\in(0,\infty)$). Since rational numbers have zero measure on the real line, anyway,
\begin{equation}\label{EqThTime2}
\lim_{\hbar\to0}\int_{-\infty}^{+\infty}dt\,b(t)\int_\Omega H_{q_0,p_0,\lambda_\hbar t}^{(\hbar)}(q,p)a(q,p)dqdp=
\langle a\rangle(p_0)\int_{-\infty}^{+\infty}b(t)\,dt.
\end{equation}

Consider the second case. Let us rewrite formula (\ref{EqFinal}) for this case (recall that the terms with $k\neq0$ vanish in this limiting case):
\begin{eqnarray}
\lim_{\hbar\to0}\Big[
\int_{\Omega\times\mathbb R}H^{(\hbar)}_{q_0,p_0,t_\hbar}(q,p)a(q,p)b(t)\,dqdpdt\nonumber\\\qquad-\frac1{(2\pi)^{\frac d2}}\sum_{j\in\mathbb Z^d}a_j(p_0)\int_{\mathbb R^d}dx\int_{-\infty}^{+\infty}dt\,b(t)\varphi(x)\overline{\varphi\left(x-\frac{2\hbar \lambda_\hbar tj}{\alpha_\hbar}\right)}\nonumber\\\qquad\qquad\qquad\times
\exp\left\lbrace
ij(q_0+2p_0\lambda_\hbar t)\right\rbrace\Big]=0.\label{EqThTime3}
\end{eqnarray}
By the Riemann--Lebesgue theorem, the integral over $t$ tends to zero as $jp_0\neq0$ due to the term $2ijp_0\lambda_\hbar t$ in the exponent. Hence,
\begin{equation*}\fl
\lim_{\hbar\to0}\Big[
\int_{\Omega\times\mathbb R}H^{(\hbar)}_{q_0,p_0,t_\hbar}(q,p)a(q,p)b(t)\,dqdpdt-\frac1{(2\pi)^{\frac d2}}\sum_{j:\,jp_0=0}a^{(b,B)}_j(p_0)
\exp(ijq_0)\Big]=0,
\end{equation*}
which can be rewritten as (\ref{EqLimt2}). If $B=0$, then  (\ref{EqLimt2}) can be obviously rewritten as  (\ref{EqLimt3}).

If $p_0$ does not belong to the resonant set, then, in (\ref{EqThTime3}), only the term with $j=0$ remains non-zero in the limit. Since
\begin{equation*}
a_0(p)=\frac1{(2\pi)^{\frac d2}}\int_{\mathbb R^d}a(q,p)\,dq,
\end{equation*}
(\ref{EqLimt2}) and (\ref{EqLimt3}) can be rewritten as (\ref{EqLimt}).
\end{proof}

\section{Discussion}\label{SecDiscus}

\subsection{Three time scales}

From theorem~\ref{ThMain} three time scales can be deduced:

\begin{enumerate}
\item ``Classical'' time scale. If $t_\hbar=t=const$, or $t_\hbar\to\infty$ but $\hbar t_\hbar/\alpha_\hbar\to0$, then the wave packet moves along the classical trajectory: the second term in the limit (\ref{EqLim}) has the form
\begin{equation*}
\delta(q-q_0-2p_0t_\hbar)\delta(p-p_0).
\end{equation*}
Conventionally, as a characteristic duration of this time scale can be chosen as the classical ``period'' of motion $T_{cl}=\pi/{\overline p}$, where $\overline p=\frac1d\sum_{j=1}^d p_j$ is the mean momentum for  $p=(p_1,\ldots,p_d)\in\mathbb R^d$.

\item $T_{coll}=\alpha_\hbar/\hbar$ is a characteristic time of the collapse of the wave packet. The rate of wave packet spreading is known to be proportional to the initial standard deviation of the momentum. The standard deviation of the Gaussian wave packet is equal to $\hbar/(2\alpha_\hbar)$;

\item $T_\hbar=2\pi/\hbar$ is the full revival time. The instants $\frac MNT_\hbar$ correspond to fractional revivals. They correspond to revivals of small copies of the wave packet in several points on the torus. The structure of fractional revivals for the general case of systems with discrete spectrum was elaborated in \cite{Averbuh,Averbuh2}. A more detailed analysis for the infinite square well is given in \cite{AronStroud,AronStroud00} (the motion in the infinite square well is equivalent to the free motion on the torus \cite{VolTrush}). For a further development of the general theory of fractional revivals see \cite{Schleich-prl,Schleich-pra,Robinett00,Robinett,AronStroud05}.
\end{enumerate}

The Ehrenfest time is $O(T_{coll})=O(\alpha_\hbar/\hbar)$. By proper choices of $\alpha_\hbar$ we can made the Ehrenfest time arbitrarily close from below to  $O(\hbar^{-1})$ and can made it arbitrarily small (but still indefinitely increasing as $\hbar\to0$), i.e., even smaller than $O(\ln\hbar^{-1})$.

\subsection{Rational and irrational times}

Theorems~\ref{ThMain} and~\ref{ThMu} distinguish rational and irrational $A$. However, every irrational $A$ can be approximated by rationals $M/N$, where $M\to\infty$ and $N\to\infty$ such that $M/N\to A$. Hence, rational (with large denominators $N$) and irrational $A$ should be physically indistinguishable. This is true in our case as well: if $N\to\infty$, then, according to (\ref{EqLim}), the number of small copies of the wave packet tends to infinity and their centres are uniformly distributed on the torus. Thus, the spatial distribution produced by the sum of many tiny wave packets tends to the uniform distribution. So, the cases of rational $A=M/N$ with large $N$ and irrational $A$ are indeed physically indistinguishable if a measurement instrument has a finite precision.

The distinguish of rational and irrational times (in the units of $T_\hbar$) in the semiclassical limit reveals the relation of quantum mechanics to number-theoretic issues discovered in some other models \cite{Kara,NumFactoring,Morse}.

\subsection{Generalizations of semiclassical measures}

To formulate the results in terms of semiclassical measures in Theorem~\ref{ThMu}, we had to narrow the class of functions $t_\hbar$ (in comparison to Theorem~\ref{ThMain}). This is due to the term $2p_0(t_\hbar-T_\hbar)$ in the argument of $\delta_B$ in (\ref{EqLim}). Generally, this term itself has no limit. The cases considered in Theorem~\ref{ThMu} are related to different cases when this divergence is eliminated. This is possible either in the case of the uniform spatial distribution, when $\delta_B$ does not depend on the spatial arguments at all, or whenever $p_0=0$, or whenever $t_\hbar-T_\hbar$ converges to a constant.

Another way of obtaining the convergent expressions for semiclassical measures is time-averaging. This way was used in \cite{MaciaRiemann, MaciaTorus, AnaMaciaTorus, AnaMaciaView, Ana14}. We consider it in Theorem~\ref{ThTime}.  Let us reformulate this theorem from a more general viewpoint developed in the aforementioned works.

Let $\{\psi_\hbar\}$ is a family of functions; $t_\hbar=\lambda_\hbar t$, where $t\in\mathbb R$, and $\lambda_\hbar\to\infty$ as $\hbar\to0$. Denote  by
$W_\hbar(q,p,t)$ the Wigner distribution of the function $\exp(-i\lambda_\hbar t\Delta)\psi_\hbar$.
Then, if for all functions $a\in C^\infty_0(\Omega)$ and $b\in L^1(\mathbb R)$ there exists the limit
\begin{equation}\label{EqSemiMt}
\lim_{\hbar\to0}\int_{\Omega\times\mathbb R}a(q,p)b(t)W(q,p,t)\,dqdpdt=
\int_{\Omega\times\mathbb R}a(q,p)b(t)\mu_t(dqdp)dt,
\end{equation}
where the time-dependent measure $\mu_t(\Omega)$ is finite and bounded as a function of  $t$, then $\mu_t$ is also called the (time-dependent) semiclassical measure. According to Theorem~\ref{ThTime}, for coherent states, if $\hbar\lambda_\hbar/\alpha_\hbar\to\infty$ or $p_0$ does not belong to the set of resonant frequencies, we have
\begin{equation*}
\mu_t(dqdp)=\frac1{(2\pi)^d}\delta(p-p_0)
\end{equation*}
for all $t$, i.e., uniform spatial distribution.

However, as we see, this approach does not distinguish all three time scales. If $\hbar\lambda_\hbar/\alpha_\hbar\to0$, then we have the classical time scale; if $\hbar\lambda_\hbar/\alpha_\hbar\to B\in(0,\infty)$, or $\hbar\lambda_\hbar/\alpha_\hbar\to \infty$ but $\hbar\lambda_\hbar\to0$, then we have  the collapse time scale; if $\hbar\lambda_\hbar\to A>0$, then we have the revival time scale.

All three cases gives the uniform spatial distribution in the case of time-averaging, but the  reasons are different. If $\hbar\lambda_\hbar/\alpha_\hbar\to B\in[0,\infty)$, then the cause of the uniformity of the spatial distribution is the averaging over the classical trajectory (if the mean momentum does not belong to resonant frequencies, this is a necessary condition in this case). If $\hbar\lambda_\hbar/\alpha_\hbar\to \infty$, then the cause of the uniformity is not the averaging over the classical trajectory, but the actual collapse of the wave packet (and the uniformity takes place irrespectively of whether the mean momentum belongs to resonant frequencies). Moreover, such important and interesting wave phenomenon like wave packet revivals is fully missed in the approach based on time-averaging.

This demonstrates limitations of the approach to long-time quantum dynamics based on semiclassical measures. Other limitations were reviewed in \cite{Carles}. As an alternative, one can consider the approach of Theorem~\ref{ThMain}, where, instead of limits of the Husimi distributions themselves, distributions equivalent to the Husimi distributions in the corresponding long-time semiclassical limits are under consideration.

Also we can try to modify the definition of the semiclassical measure by introduction a correction for the classical phase flow. Let us denote $g^t(q,p)$ the displacement of the point  $(q,p)$ on $t$ along the classical phase trajectory. In the case of free motion on the torus, $g^t(q,p)=(q+2pt,p)$. If $\hbar(t_\hbar-T_\hbar)\to0$, then define

\begin{equation}\label{EqSemiMf}
\lim_{\hbar\to0}\int_\Omega W_{\hbar}(g^{t_\hbar+AT_\hbar}(q,p))a(q,p)\,dqdp=\int_\Omega a(q,p)\omega(dqdp).
\end{equation}
Then formula (\ref{EqLim}) of Theorem~\ref{ThMain} takes the form

\begin{equation*}
\omega(dqdp)=
\frac1{N'}\sum_{k\in[N']^d}
\delta_B\left(q-q_0-\Delta q_0+\frac{2\pi k}{N'}\right)\delta(p-p_0)\,dqdp.
\end{equation*}

An interesting question is a possibility of generalization of this approach. One of the difficulties is that the exact revival after some time $T_\hbar$ is a property of only quadratic Hamiltonians. In general, the dynamics of systems with discrete spectrum is not periodic, but almost periodic.

\subsection{Gaussian coherent states}
Let $\varphi$ be Gaussian (\ref{EqGauss}). Then
\begin{equation*}
\delta_B(q)=\frac1{(2\pi)^d}\sum_{j\in\mathbb Z^d}\exp\left\lbrace-\frac{(Bj)^2}2+ijq\right\rbrace=
\theta\left(\frac q{2\pi},\frac{B^2}{2\pi}\right),
\end{equation*}
where
\begin{equation*}
\theta(x,\tau)=\sum_{k\in\mathbb Z^d}\exp\{-\pi\tau k^2+2\pi ikx\}
\end{equation*}
is the theta function of several variables, $x\in\mathbb R^d$, $\tau\in\mathbb C$, $\mathrm{Re}\,\tau>0$. Using the functional equation for the theta function \cite{Mum} 
\begin{equation*}
\theta\left(\frac{x}{i\tau},\frac{1}{\tau}\right)=\tau^{\frac d2} \exp\left(\frac{\pi x^2}\tau\right)\theta(x,\tau),
\end{equation*}
we arrive at
\begin{equation*}
\delta_B(q)=\frac1{(2\pi B^2)^\frac d2}\sum_{n\in\mathbb Z^d}
\exp\left\lbrace-\frac{(q-2\pi n)^2}{2B^2}\right\rbrace.
\end{equation*}
So, in this case, $B$ is the spatial standard deviation of the wave packet. We have reproduced the corresponding results of work \cite{VolTrush}.

\subsection{Physically small parameters}\label{SecPhysSmall}

We mentioned in Sec.~\ref{SecPrelim} that, physically, the Planck constant cannot tend to zero and one should speak about the smallness of certain dimensionless quantities. In our case the condition $\hbar\to0$, $\alpha_\hbar\to0$, $\hbar/\alpha_\hbar\to0$ is equivalent to the condition that every time scale is much greater than the previous one, i.e., they are ``well distinguishable'':

\begin{equation*}
T_{rev}\gg T_{coll}\gg T_{cl}.
\end{equation*}

In other words we can say:
\begin{itemize}
\item $\alpha_\hbar\ll 2\pi $ means that the spatial extension of the wave packet is much smaller than the size of the torus (this corresponds to $T_{coll} \ll T_{rev}$);

\item $\overline p\pi \gg \hbar$ means that the physical action related to a single revolution of the particle around the torus is much greater than the quantum of action (this corresponds to $T_{rev}\gg T_{cl}$);

\item $\overline p\alpha_\hbar \gg \hbar$ means that the action related to the motion of the center of the wave packet  along its spatial extension is much larger than the quantum of action (this is a strengthening of the previous condition; corresponds to $T_{coll} \gg T_{cl}$).
\end{itemize}

\section{Conclusions}

We have obtained explicit expressions for semiclassical measures corresponding to all stages of evolution of quantum wave packets on the flat torus: classical-like motion, spreading and revivals of the wave packet. The second time scale is the Ehrenfest time scale and the third one is beyond it. These explicit expressions allows to understand the limitations of the notion of semiclassical measure and to propose some generalizations.

The results can be applied to the particle in the infinite square well because it is reduced to the dynamics on the flat torus \cite{VolTrush}. An interesting problem would be the calculation of semiclassical measures for more general potentials, for example, the Morse potential, as well as various multi-dimensional bounded domains and compact manifolds. Coherent states for the Morse potential were constructed in \cite{Angelova}, the structure of revivals was studied in \cite{WangHeller,Morse}. One can also consider quantum optimal control problems (with both coherent and incoherent controls \cite{PechIlyn,PechTrush}) in semiclassical long-time  limit.

Our method of research was  direct summation of series of eigenvectors for time-evolved wavepackets, instead of reduction of the quantum dynamics to the classical dynamics usually applied in the semiclassical analysis. Though this method has already showed its effectiveness in the Jaynes--Cummings model \cite{Kara}, its possibilities for analysis of quantum mechanical models are still underexplored.

\section*{Acknowledgements}

Tha author is very grateful to M.\,V. Berry, S.\,Yu.~Dobrokhotov, A.\,S.~Holevo, J.\,R.~Klauder, I.\,V.~Volovich, and E.\,I.~Zelenov for helpful suggestions, comments, and interest to the work.

The work was supported by the Russian Science Foundation under grant 14-50-00005.

\section*{References}

\end{document}